\DeclareMathAlphabet{\mathcal}{OMS}{cmsy}{m}{n}
\SetMathAlphabet{\mathcal}{bold}{OMS}{cmsy}{b}{n}
\crefname{algocf}{algorithm}{algorithms}  
\Crefname{algocf}{Algorithm}{Algorithms}  
\crefname{enumi}{part}{parts}   
\Crefname{enumi}{Part}{Parts}   
\crefname{footnote}{Footnote}{Footnotes}
\newcolumntype{Y}{>{\centering\arraybackslash}X}
\let\oldtabular\tabular 
\renewcommand{\tabular}{\normalsize\oldtabular}
\theoremstyle{plain}
\newtheorem{theorem}{Theorem}
\newtheorem{lemma}{Lemma}[section]
\newtheorem{claim}[lemma]{Claim}
\newtheorem{corollary}[lemma]{Corollary}
\theoremstyle{definition}
\newtheorem{definition}[lemma]{Definition}
\newtheorem{remark}[lemma]{Remark}
\newtheorem{example}[lemma]{Example}
\newcommand{\abs}[1]{\ensuremath{\mathopen\lvert #1 \mathclose\rvert}}
\newcommand{\np}{\ensuremath{\mathrm{NP}}}
\newcommand{\ps}{\ensuremath{\mathrm{PSPACE}}}
\newcommand{\nps}{\ensuremath{\mathrm{NPSPACE}}}
\newcommand{\ef}{\mathsf{EF}}
\def\gb#1#2{g_{#2,\text{best}}^{#1}}
\def\cw#1#2{{c_{#2,\text{worst}}^{#1}}}
\def\e#1{{\varepsilon}_{#1}}
\def\sm{\setminus}
\def\genvy{{\mathsf{G}}_{\text{envy}}}
\def\G{\mathcal{G}}
\newcommand{\ceq}{\coloneq}
\def\set#1{\ensuremath{\left\{#1\right\}}}
\def\ceil#1{\left\lceil #1 \right\rceil}
\newcommand\restr[2]{{
  \left.\kern-\nulldelimiterspace 
  #1 
  \vphantom{\big|} 
  \right|_{#2} 
  }}
\newcommand{\linkline}[1]{Line~\hyperref[#1]{\ref*{#1}}}
\newcommand{\efconn}{\mathsf{EF1}\text{-}\mathsf{Restoration}}
\newcommand{\efopt}{\mathsf{Optimal}\text{-}\mathsf{EF1}\text{-}\mathsf{Restoration}}
\newcommand{\efxconn}{\mathsf{EFX}\text{-}\mathsf{Restoration}}
\newcommand{\efxopt}{\mathsf{Optimal}\text{-}\mathsf{EFX}\text{-}\mathsf{Restoration}}
\newcommand{\pmr}{\mathsf{Perfect}\text{ }\mathsf{Matching}\text{ }\mathsf{Reconfiguration}}
\newcommand{\gval}{G_{\text{val}}}
\newcommand{\ignore}[1]{}
\title{Fair Division in a Variable Setting}
\author{%
  Harish Chandramouleeswaran\\
   Chennai Mathematical Institute, India\\
   \texttt{harishc@cmi.ac.in} \\
   \and
   Prajakta Nimbhorkar\thanks{Corresponding author. E-mail for correspondence: \texttt{prajakta@cmi.ac.in}}\\
   Chennai Mathematical Institute, India\\
   \texttt{prajakta@cmi.ac.in}
\and
   Nidhi Rathi\\
   University of Warsaw, Poland\\
   \texttt{n.rathi@uw.edu.pl}
}
\begin{document}

\maketitle


\begin{abstract} \label{sec:abstract}

We study the classic problem of fairly dividing a set of indivisible items among a set of agents and consider the popular fairness notion of \emph{envy-freeness up to one item} ($\mathsf{EF1}$). While in reality, the set of agents and items may vary, previous works have studied \emph{static} settings, where no change can occur in the system. We initiate and develop a formal model to understand fair division under a \emph{variable input} setting. Given an arbitrary initial allocation, the goal is to reach an $\mathsf{EF1}$ allocation through item transfers while causing {\em minimal disruption} to the agents. We formally introduce the notion of {\em valid transfers} to achieve minimal disruption and refer to this as the {\em $\mathsf{EF1}\text{-}\mathsf{Restoration}$} problem.

We develop efficient algorithms for $\mathsf{EF1}\text{-}\mathsf{Restoration}$ when agents have identical monotone valuations over items that are either all goods or all chores. Surprisingly, even for identical additive valuations, we prove that it is $\mathrm{NP}$-hard to optimize the number of valid transfers for $\mathsf{EF1}\text{-}\mathsf{Restoration}$. For the stronger problem of $\mathsf{EFX}\text{-}\mathsf{Restoration}$, we show that it is $\mathrm{NP}$-hard to decide whether the problem admits a positive solution for identical additive valuations. We also show that unlike the case of goods and chores, $\mathsf{EF1}\text{-}\mathsf{Restoration}$ may not be possible for mixed manna.

For additive binary valuations, we show that (i) it is $\mathrm{NP}$-hard to decide whether $\mathsf{EF1}\text{-}\mathsf{Restoration}$ is possible, and (ii) it is $\mathrm{NP}$-hard to find the \emph{optimal} number of valid transfers for the same.
We complement this with a positive result for a subclass of additive binary valuations that are defined using {\em multigraphs}, introduced by Christodoulou et al.\ (EC 2023). Here, we present a polynomial-time algorithm for the $\mathsf{EF1}\text{-}\mathsf{Restoration}$ problem when allocations are required to be {\em orientations}.

Finally, for monotone binary valuations, we show that the problem of deciding whether $\mathsf{EF1}\text{-}\mathsf{Restoration}$ is possible is $\mathrm{PSPACE}$-complete. Our results provide an extensive picture of the complexity landscape of the $\mathsf{EF1}\text{-}\mathsf{Restoration}$ problem across various valuation classes.  
\end{abstract}

\tableofcontents


\section{Introduction} \label{sec:intro}

\emph{Fair Division} studies the fundamental problem of dividing a set of indivisible resources among a set of interested parties (often dubbed as \emph{agents}) in a \emph{fair} manner. The need for fairness is inherent in the design of many social institutions and occurs naturally in many real-life scenarios such as inheritance settlements, radio and satellite spectrum allocations, and air traffic management, to name a few \cite{budish2012multi,etkin2007spectrum,moulin2004fair,vossen2002fair,pratt1990fair}. While the first mentions of fair division date back to the Bible and Greek mythology, its first formal study is credited to Steinhaus, Banach, and Knaster in 1948 \cite{steinhaus1948problem}. Since then, the theory of fair division has enjoyed flourishing research from mathematicians, social scientists, economists, and computer scientists alike. The last group has brought a new flavor of questions related to different models of computing \emph{fair} allocations. We refer to \cite{survey2022,brams1996fair,brandt2016handbook,robertson1998cake} for excellent expositions and surveys of fair division.

\emph{Envy-freeness} is one of the quintessential notions of fairness for resource allocation problems that entails every agent to be \emph{happy} with their own share and not prefer any other agent's share to theirs in the allocation, i.e., being \emph{envy-free} \cite{foley1966resource}. This notion has strong existential properties when the resource to be allocated is divisible, like a \emph{cake} \cite{stromquist1980cut,Su1999rental}. But, a simple example of two agents and one valuable item shows that envy-free allocations may not exist for the case of indivisible items. Hence, several variants of envy-freeness have been explored in the literature. \emph{Envy-freeness up to one item} ($\mathsf{EF1}$) is one such popular relaxation \cite{budish2011combinatorial} that entails an allocation to be \emph{fair} when any envy for an agent must go away after removing some item from the envied bundle. When we have an $\mathsf{EF1}$ allocation, we say that the agents are \emph{$\mathsf{EF1}$-happy}, while if an agent's envy does not go away after removing some item from the envied bundle, we refer to it as \emph{$\mathsf{EF1}$-envy}. It is known that $\mathsf{EF1}$ allocations always exist and can be computed efficiently as well \cite{lipton2004approximately}.

Given that the notion of $\mathsf{EF1}$ is tractable, it becomes a tempting problem to explore it in a \emph{changing environment}, i.e., what happens when we have a \emph{variable} input of items and agents, as is natural in many real-world scenarios. This work aims to develop tools for the problem of \emph{restoring fairness} in a varying system while causing \emph{minimal disruption to the existing allocation}. 

We initiate the study of fair division of indivisible items in a changing environment due to variable input. In this work, we go beyond the classic \emph{static} setting of fair division and study an interesting variant where some agent(s) may lose some item(s) in a given $\mathsf{EF1}$ allocation, and therefore, may no longer be $\mathsf{EF1}$-happy. Another such scenario is when some new agent(s) enter the system with empty bundles to begin with, and naturally $\mathsf{EF1}$-envy some of the existing agents. Yet another scenario is when the valuations may change over time.

Given an arbitrary allocation, the goal is to redistribute the items by causing {\em minimal disruption} to the existing allocation and reach an $\mathsf{EF1}$ allocation. To do so, we consider a sequence of \emph{item transfers} such that it does not create any new $\mathsf{EF1}$-envy among the agents. We refer to such transfers that cause minimal disruption to the system as \emph{valid transfers}. Our aim is to modify the input allocation by performing a sequence of valid transfers between agents and eventually reach an $\mathsf{EF1}$ allocation.
 
Note that a natural approach is to recompute an $\mathsf{EF1}$ allocation from scratch (say, by using the algorithm in \cite{lipton2004approximately}), but this may lead to extensive (and possibly unnecessary) changes across all bundles. Such global recomputation is often undesirable, as it disrupts existing assignments. Instead, it is important (and desirable) to restore fairness while causing minimal disruption to the existing allocation. This is motivated by real-world applications where allocations may evolve over time. For example, in an organization, new employees may join or tasks may get completed and removed from the system. A gradual redistribution of workload, while minimizing reshuffling and maintaining fairness is essential for the smooth functioning of the system and also for employee satisfaction. Another example is where machines go offline due to technical faults and we need to gradually redistribute capacity without creating huge unfair changes.

A central difficulty in restoring fairness is that arbitrary item transfers may eliminate existing EF1 violations while simultaneously creating new ones. We study the problem of $\efconn$: given an initial allocation, can one reach an $\mathsf{EF1}$ allocation via a sequence of small, controlled changes?  We capture this requirement through the notion of \emph{valid transfers}, which constrain how items may be reassigned along the path to fairness. An item transfer is \emph{valid} if it does not create any new $\mathsf{EF1}$-envy among agents.

The concept of valid transfers is inspired by the work of Igarashi et al.\ ~\cite{igarashi2024reachability} where they consider the problem of deciding if two given $\mathsf{EF1}$ allocations can be reached from one another via a sequence of \emph{exchanges} such that every intermediate allocation remains $\mathsf{EF1}$. We extend this framework to our setting and define \emph{valid transfers} such that no intermediate allocation creates any new $\mathsf{EF1}$-envy in the system. 

We remark that a different non-static setting of \emph{online} fair division has also been studied, in which items or agents arrive and depart over time and the aim is to dynamically build (from scratch) a fair division against an uncertain future; see the survey \cite{online15}. Another line of research is to explore fair division protocols that remain consistent with 
\emph{population monotonicity} and \emph{resource monotonicity}, in which the set of agents or items may vary \cite{chakraborty2021picking}.

\paragraph{Our Contribution:}
In this work, we develop a formal model of fair division under variable input. Given an input allocation, we study the problem of $\efconn$ where the goal is to reach an $\mathsf{EF1}$ allocation via a sequence of \emph{valid transfers} of items, i.e., no item-transfer creates new $\mathsf{EF1}$-envy in the system. We say that $\efconn$ is possible, if there exists a sequence of valid transfers (starting from the input allocation) that leads to an $\mathsf{EF1}$ allocation. We study the related optimization problem, $\efopt$, that determines the minimum number of valid transfers required for $\efconn$. Our main results are listed below, and summarized in \cref{tab:main-results-summary}.

\begin{enumerate}
\item In \cref{sec:identical}, we consider 
the $\efconn$ problem for \emph{identical} valuations. 
\begin{itemize}
    \item In \cref{thm:identical-general}, we show that $\efconn$ is possible  for \emph{identical monotone valuations} and design a polynomial-time algorithm (\Cref{algo:identical-general}) for the same. Surprisingly, we show that $\efopt$ is $\mathrm{NP}$-hard even for identical additive valuations (\cref{thm:k-valid-transfers-EF1-NP-hard-identical-additive}). 
    \item We extend these results in \cref{subsec:identical-chores-general} and prove analogous results for the case of chores as well (\cref{thm:identical-chores-general}). However, $\efconn$ may not always be possible for the case of \emph{mixed manna}, i.e., a mix of goods and chores, even for identical additive valuations. We present such a negative instance (\cref{example:identical-ef1-mixed-counterexample}) in \cref{subsec:ef1-mixed-case}.
\end{itemize}

\item In \cref{sec:additive-binary}, we consider the $\efconn$ problem for \emph{additive binary} valuations.
\begin{itemize}
    \item In \cref{subsec:additive-binary-NP-hard}, we prove (in \cref{thm:EF1-Restoration-NP-hard-additive-binary}) that the following problems are $\mathrm{NP}$-hard for instances with \emph{additive binary} valuations: 
(i) deciding whether $\efconn$ is possible, and (ii) $\efopt$.

\item In \cref{subsec:graphical}, we discuss $\efconn$ for a subclass of additive binary valuations -- namely, the class of \emph{graphical valuations} introduced by Christodoulou et al.\ \cite{christodoulou2023fair}. In \Cref{thm:binary-arb}, we show that, for additive binary valuations over multigraphs, $\efconn$ is always possible in polynomial time, when the allocations are required to be \emph{orientations}. Moreover, the number of valid transfers made by the algorithm is optimal up to an additive factor independent of $n$. See \cref{sec:prelims} for the formal definitions of graphical valuations and an orientation.
\end{itemize}
   
\item In \cref{sec:pspace}, we prove that $\efconn$ is $\ps$-complete for \emph{monotone binary} valuations (\cref{thm:pspace}).

\item Finally, in \cref{sec:identical-efx}, we discuss the analogous problem of $\mathsf{EFX}$\text{-}$\mathsf{Restoration}$, for the stronger $\mathsf{EFX}$ fairness notion \cite{CaragiannisKMPS19}. We show that restoring $\mathsf{EFX}$ is not always possible via valid transfers or exchanges even for identical additive valuations (\cref{example:identical-efx-counterexample}). Furthermore, we prove (in \cref{thm:EFX-Restoration-weakly-NP-hard-identical-additive}) that the following problems are $\mathrm{NP}$-hard in the above setting: 
(i) deciding whether $\efconn$ is possible, and (ii) $\efopt$. 
\end{enumerate}

\begin{table}[htbp]
\centering
\small
\setlength{\tabcolsep}{2pt}
\renewcommand{\arraystretch}{1.15}

\begin{tabular}{|>{\raggedright\arraybackslash}m{3.8cm}|
                >{\centering\arraybackslash}m{3.4cm}|
                >{\centering\arraybackslash}m{4.6cm}|
                >{\centering\arraybackslash}m{3.6cm}|}
\hline
\multicolumn{1}{|>{\columncolor{gray!20}\centering\arraybackslash}m{3.8cm}|}{\textbf{Valuation class}}
&
\multicolumn{1}{>{\columncolor{gray!20}\centering\arraybackslash}m{3.4cm}|}{\textbf{\(\efconn\)}}
&
\multicolumn{1}{>{\columncolor{gray!20}\centering\arraybackslash}m{4.6cm}|}{\textbf{\(\efopt\)}}
&
\multicolumn{1}{>{\columncolor{gray!20}\centering\arraybackslash}m{3.6cm}|}{\textbf{Reference}}
\\
\hline

\cellcolor{green!10} \makecell{Identical monotone \\ \emph{(goods$^*$ / chores$^{**}$)}}
&
\cellcolor{blue!8} \makecell{\(\mathrm{P}\) \\ (\cref{thm:identical-general})}
&
\cellcolor{blue!8} strongly \(\mathrm{NP}\)-complete
&
\cellcolor{orange!10} \cref{subsec:ef1-restoration-identical-monotone}
\\
\hline

\cellcolor{green!10} \makecell{Identical additive \\ \emph{(goods / chores$^{**}$)}}
&
\cellcolor{blue!8} \(\mathrm{P}\)
&
\cellcolor{blue!8} strongly \(\mathrm{NP}\)-complete (\cref{thm:k-valid-transfers-EF1-NP-hard-identical-additive})
&
\cellcolor{orange!10} \cref{subsec:NP-hard-opt-identical-additive}
\\
\hline

\cellcolor{green!10} \makecell{Identical additive \\ \emph{(mixed manna)}}
&
\cellcolor{red!15} \textsc{Open}
&
\cellcolor{red!15} \textsc{Open}
&
\cellcolor{orange!10} \cref{subsec:ef1-mixed-case}$^{\dagger}$
\\
\hline

\cellcolor{green!10} \makecell{Additive binary}
&
\cellcolor{blue!8} \makecell{\(\mathrm{NP}\)-hard \\ (\cref{thm:EF1-Restoration-NP-hard-additive-binary})}
&
\cellcolor{blue!8} \makecell{\(\mathrm{NP}\)-hard \\ (\cref{thm:EF1-Restoration-NP-hard-additive-binary})}
&
\cellcolor{orange!10} \cref{subsec:additive-binary-NP-hard}
\\
\hline

\cellcolor{green!10} \makecell{Additive binary \\ on multigraphs}
&
\cellcolor{blue!8} \makecell{\(\mathrm{P}\) \\ (\cref{thm:binary-arb})}
&
\cellcolor{blue!8} \makecell{\(\mathrm{P}^{\dagger \dagger}\) \\ (\cref{thm:binary-arb})}
&
\cellcolor{orange!10} \cref{subsec:graphical}
\\
\hline

\cellcolor{green!10} \makecell{Monotone binary}
&
\cellcolor{blue!8} \(\mathrm{PSPACE}\)-complete (\cref{thm:pspace})
&
\cellcolor{blue!8} \(\mathrm{NP}\)-hard
&
\cellcolor{orange!10} \cref{sec:pspace}
\\
\hline

\cellcolor{green!10} \makecell{Identical additive (\(\mathsf{EFX}\)) \\ \emph{(goods / chores$^{**}$)}}
&
\cellcolor{blue!8} \makecell{weakly \(\mathrm{NP}\)-hard \\ (\cref{thm:EFX-Restoration-weakly-NP-hard-identical-additive})}
&
\cellcolor{blue!8} \makecell{weakly \(\mathrm{NP}\)-hard \\ (\cref{thm:EFX-Restoration-weakly-NP-hard-identical-additive})}
&
\cellcolor{orange!10} \cref{sec:identical-efx}
\\
\hline
\end{tabular}

\vspace{0.4ex}
\begin{minipage}{0.97\linewidth}
\footnotesize
\emph{Note:} $^*$For identical monotone valuations, \cref{thm:identical-general} yields a valid restoration sequence with at most \(m\) transfers. 

$^{**}$The analogous results also hold for the \emph{chores} setting, with similar proofs (see \cref{thm:identical-chores-general}, \cref{rem:opt-ef1-chores-np-hard}, and \cref{rem:efx-chores}). 

$^{\dagger}$For the \emph{mixed manna} setting, \cref{example:identical-ef1-mixed-counterexample} presents an instance where $\efconn$ is not possible.

$^{\dagger \dagger}$The number of valid transfers made is optimal \emph{up to an additive factor independent of $n$}.
\end{minipage}

\caption{Summary of our main results}
\label{tab:main-results-summary}
\end{table}

\paragraph{Further Related Work:} As mentioned above, the closest work to ours is by Igarashi et al.\ ~\cite{igarashi2024reachability}, in which they consider the problem of deciding if two given $\mathsf{EF1}$ allocations can be reached from one another via a sequence of \emph{exchanges} such that every intermediate allocation remains $\mathsf{EF1}$. They prove that reachability is guaranteed for two agents with identical or binary valuations, as well as for any number of agents with identical binary valuations. Furthermore, they show that two $\mathsf{EF1}$ allocations may not be reachable from one another even in the case of two agents, and deciding reachability is $\ps$-complete in general.

Ito et al.\ ~\cite{ItoKKKNOO23} also consider a similar problem of reachability (to a given target allocation) via exchanges for dichotomous valuations in the setting where communication among the agents is limited, and is represented by an edge. Here, a pair of agents can exchange their items only if the two agents are connected by an edge and they approve the items they receive. They prove that this problem is $\ps$-complete even when the communication graph is complete, but it can be solved in polynomial time if the input graph is a tree.

There are two related papers that have appeared since the AAMAS 2025 conference version of this work \cite{ChandramouleeswaranNR25}. Given an initial allocation of goods, Yuen et al.\ \cite{yuen2024reforming} consider the problem of \enquote{reforming} it via a sequence of \emph{exchanges} to attain $\mathsf{EF1}$. The exchange operation restricts the resulting $\mathsf{EF1}$ allocation to have the same {\em size vector} as the initial allocation. In contrast, in our work, the operation is a \emph{valid transfer of a single item}. The operation of the \emph{exchange of a pair of items} has also been used in \cite{igarashi2024reachability, ItoKKKNOO23, yuen2024reforming}. Dong et al.\ \cite{dong2025reconfiguring} study the problem of reconfiguring \emph{proportional committees}: given two proportional committees, they discuss the existence of a transition path that consists only of proportional committees, where each transition involves replacing one candidate with another candidate.

Another well-studied notion of fairness in the discrete setting is that of \emph{envy-freeness up to any item} ($\mathsf{EFX}$)~\cite{CaragiannisKMPS19}. We say that agent $i$ \emph{strongly envies} agent $j$ if there exists an item $g \in X_j$ such that the envy of $i$ toward $j$ persists even after removing $g$ from $X_j$. An allocation is \emph{EFX} if no agent strongly envies another agent. $\mathsf{EFX}$ is arguably the most popular one in discrete fair division, and is stronger than $\mathsf{EF1}$ (see \cite{Procaccia20} for an excellent survey). The existence of EFX allocations is known only in very restricted settings; e.g., for a limited number of agents \cite{plaut2020almost,chaudhury2020efx,berger2021almost,AACGMM25}, limited number of items \cite{amanatidis2020multiple,mahara2024extension}, limited number of distinct valuations \cite{maharaExtensionAdditive24,hvetalEFXExists24}, and special valuation classes \cite{plaut2020almost,halpern2020fair,amanatidis2021maximum}. Whether EFX allocations exist for additive valuations remains a major open problem in fair division theory. A recent work \cite{akrami2026counterexampleefxnge} has shown non-existence of $\mathsf{EFX}$ for three agents with for monotone valuations.

As mentioned earlier, several non-static models of fair division have been studied. One such model considers the online nature of either items or agents arriving or departing over time, in which one needs to dynamically allocate the items in a fair manner against an uncertain future.
There are various works that tackle different dimensions of the problem: (i) resource being divisible \cite{walsh2011online, VardiPF22}, or indivisible \cite{aleksandrov2017pure,kash2014no}, (ii) resources being fixed and agents arriving over time \cite{walsh2011online}, or agents being fixed and resources arriving over time \cite{online15, he2019achieving, BenadeHP25}, or both arriving online \cite{mattei2017mechanisms}, and (iii) whether mechanisms are informed \cite{walsh2011online} or un-informed \cite{he2019achieving, VardiPF22, BenadeHP25} about the future.

Another line of work studies resource- and population-monotonicity, where the set of agents or items may change. The goal is to construct fair division protocols in which the utility of all participants change in the same direction - either all of them are better off (if there is more to share, or fewer to share among), or they are all worse off (if there is less to share, or more to share among); see \cite{chakraborty2021picking,segal2018resource,segal2019monotonicity}.

The notion of fairness considered in our work as well as in the results mentioned above is envy-freeness up to one item ($\mathsf{EF1}$). Since its introduction by Budish \cite{budish2011combinatorial}, it has been extensively studied in the literature in various settings \cite{CISZ21, BarmanKV2018, halpern2020fair}, to mention a few. For goods, a polynomial-time algorithm to compute an $\mathsf{EF1}$ allocation was given by Lipton et al.\ \cite{lipton2004approximately}. For chores, a polynomial-time algorithm to compute $\mathsf{EF1}$ allocations for monotone valuations was given by Bhaskar et al.\ \cite{BhaskarSV2020}. For a mix of goods and chores, Aziz et al.\ \cite{AzizCIW19} defined an analogous notion of $\mathsf{EF1}$ and gave a polynomial-time algorithm to compute an $\mathsf{EF1}$ allocation in the case of additive valuations. 

As mentioned previously, the class of \emph{graphical} valuations models the setting in which every item is valued by at most two agents, and can therefore be seen as an edge between them. This model was introduced in \cite{christodoulou2023fair} in the context of \emph{envy-freeness up to any item} ($\mathsf{EFX}$) allocations. They proved the existence of $\mathsf{EFX}$ allocations when the underlying graph is simple. This work has motivated studies on $\mathsf{EFX}$ and $\mathsf{EF1}$ \emph{orientations} and allocations in the graphical setting -- for instance, Zhou et al.\ \cite{landscape24} studied the mixed manna setting with both goods and chores, and proved that determining the existence of $\mathsf{EFX}$ orientations for agents with additive valuations is $\np$-complete, and provided certain tractable special cases. Zeng et al.\ \cite{zeng2024structure} related the existence of $\mathsf{EFX}$ orientations and the chromatic number of the underlying graph. Deligkas et al.\ \cite{deligkas2024ef1efxorientations} showed that $\mathsf{EF1}$ orientations always exist for monotone valuations, and can be computed in pseudopolynomial time. We refer the readers to the excellent surveys by Biswas et al.\ ~\cite{Biswas2023} on fair division under such \enquote{structured set constraints}, and by Amanatidis et al.\ ~\cite{survey2022} and Guo et al.\ ~\cite{guo2023survey} on various notions of fairness in the settings of both goods and chores.


\section{Notation and Preliminaries} \label{sec:prelims}

For a positive integer $k$, we write $[k]$ to denote the set $\{1,2, \ldots,k\}$, and we write $2^S$ to denote the power set of a set $S$.

Consider a set $\G$ of $m$ \emph{indivisible} items that needs to be \emph{fairly} allocated to a set $[n]$ of $n$ agents. The preferences of an agent $i \in [n]$ over these items is specified by a valuation function $v_i \colon 2^{\G} \to {\mathbb{R}}^+ \cup \set{0}$, i.e., $v_i(S)$ denotes the value agent $i \in [n]$ associates to the set of items $S \subseteq \G$. An \emph{allocation} $X = (X_1,\ldots,X_n)$ is a partition of the set $\G$ among the $n$ agents, where $X_i$ corresponds to the \emph{bundle} assigned to agent $i \in [n]$. The quantity $v_i(X_i)$ is also referred to as the \emph{utility} of agent $i$ in the allocation $X$. The goal is to construct `fair', i.e., \emph{envy-free up to one item} ($\mathsf{EF1}$) allocations (see \cref{def:ef1}). The tuple $\mathcal{I}=\left<[n], \G, \set{v_i}_{i \in [n]} \right>$ is called a \emph{fair division instance}. For an item $g \in \G$, we denote $v_i(\set{g})$ simply by $v_i(g)$ for a cleaner presentation.

A valuation function $v$ is said to be \emph{monotone} if $v(S) \leq v(T)$ for all $S, T \subseteq \G$ with $S \subseteq T$. $v$ is said to be \emph{binary} if, for each $S \subseteq \G$, and for each $g \in \G$, we have $v(S \cup g) - v(S) \in \set{0,1}$, i.e., if each item $g \in \G$ adds a marginal value of either $0$ or $1$ to any bundle. We also study a natural subclass of monotone valuations, namely \emph{additive} valuations. $v$ is said to be \emph{additive} if, for all $S \subseteq \G$, we have $v(S) = \sum_{g \in S} v(g)$. Observe that an additive valuation is binary if and only if $v(g) \in \set{0,1}$, for each $g \in \G$.

In this work, we also study the class of \emph{graphical valuations}, introduced by Christodoulou et al. \cite{christodoulou2023fair}. These are inspired by geographic contexts, where agents only care about resources nearby and show no interest in those located farther away.

\begin{definition}[Graphical Valuations]\label{def:graphical}
Consider a fair division instance over a graph $\gval = (V,E)$ where vertices correspond to agents and the edges correspond to items. Here, each vertex $i \in V$ has valuation $v_i$ such that $v_i(g)>0$ if and only if the edge $g$ is incident to agent $i$ in $\gval$. For each edge $(i,j) \in E$, we denote by $E(i,j)$ the set of edges (i.e., items) that are valued positively by the agents $i$ and $j$.
\end{definition}

In this work, we consider \emph{multi-graphical valuations} that are additive and binary, i.e., multiple items may be valued (at $1$) by the same pair of agents. We refer to these simply as graphical valuations in the rest of the paper.

Let us now define the concepts of envy and $\mathsf{EF1}$-envy, in order to define the fairness notion of $\mathsf{EF1}$ that is the focus of this work.

\begin{definition}[Envy and $\mathsf{EF1}$-envy]\label{def:ef1-envy}
Agent $i \in [n]$ is said to \emph{envy} agent $j \in [n]$ in an allocation $X$ if $i$ values $j$'s bundle strictly more than her own bundle, i.e., $v_i(X_i) < v_i(X_j)$. Furthermore, we say $i$ \emph{EF1-envies} $j$ if $v_i(X_i) < v_i(X_j \sm g)$ for all $g \in X_j$, i.e., $i$ continues to envy $j$ even after virtually removing \emph{any} one item from $j$'s bundle. 
We say $i$ is \emph{EF1-happy} in $X$ if she does not $\mathsf{EF1}$-envy any other agent (otherwise, $i$ is \emph{EF1-unhappy} in $X$).
\end{definition}

We now define the fairness notion of \emph{envy-freeness up to one item} ($\mathsf{EF1}$).

\begin{definition}[Envy-freeness up to one item ($\mathsf{EF1}$) \cite{budish2011combinatorial}]\label{def:ef1}
An allocation $X$ is said to be $\mathsf{EF1}$ if every agent is $\mathsf{EF1}$-happy in $X$.
\end{definition}

We introduce the following notation to quantify the \enquote{amount of $\mathsf{EF1}$-envy} between a pair of agents in an allocation.

\begin{definition}[Amount of $\mathsf{EF1}$-envy]\label{def:amount-of-ef1-envy}
Let $X=(X_1,\ldots,X_n)$ be an allocation. The amount of $\mathsf{EF1}$-envy that agent $i$ has towards agent $j$ in $X$, denoted by $\e{i\to j}(X)$, is defined as
\[
\e{i\to j}(X) \coloneq \underset{g \in X_j}{\min} \set{v_i(X_j\sm g)-v_i(X_i)}.
\]
Note that $X$ is an $\mathsf{EF1}$ allocation if and only if $\e{i \to j}(X)\leq 0$ for each $i,j\in [n]$.
 \end{definition}

A special type of allocation called an \emph{orientation} has been studied recently, and was introduced by Christodoulou et al. \cite{christodoulou2023fair} while studying $\mathsf{EFX}$ allocations in the context of graphical valuations. Orientations were first studied in the $\mathsf{EF1}$ setting by Deligkas et al. \cite{deligkas2024ef1efxorientations}.

\begin{definition} [Orientation] \label{def:orientation}
An allocation $X$ is an \emph{orientation} if every item is allocated to an agent who has positive marginal value for it.
\end{definition}

An orientation is a type of \emph{non-wasteful allocation}. Note that orientations for graphical valuations with graph $\gval$, are equivalent to assigning a direction to each edge in $\gval$, such that the edge is directed towards the agent who owns the item in a particular allocation.

We define the concept of the \emph{envy graph} of an allocation, that we crucially use to solve $\efconn$ in the case of graphical valuations.

\begin{definition} \label{def:envy-graph}
For an allocation $X = (X_1,\ldots,X_n)$, its \emph{envy graph} $\genvy(X)$ is a directed graph with agents appearing as vertices. We add a directed edge from agent $i$ to agent $j$ if $i$ envies $j$, i.e., if $v_i(X_i) < v_i(X_j)$.
\end{definition}


\section{Our Problem and Contribution}\label{sec:results}

In this section, we define the problem and state our results formally. 
\subsection{Problem Setup} \label{sec:setup}

Given an allocation, the goal of this work is to reach an $\mathsf{EF1}$ allocation by initiating a sequence of \emph{valid transfers}, which we now define.

\begin{definition}[Valid Transfer]\label{def:valid-transfer}
For an allocation $X$, consider a single item transfer from an agent to another, resulting in allocation $Y$. Then, this transfer is said to be \emph{valid} if for all $i,j \in [n]$, either $\e{i \to j}(Y) \leq 0$, or $\e{i \to j}(Y) \leq \e{i \to j}(X)$. In other words, no agent $\mathsf{EF1}$-envies another agent in $Y$ any more than they did in $X$. If $X$ is an orientation, we require that the allocation resulting from a valid transfer is also an orientation.
\end{definition}

Some of our results also discuss \emph{valid exchanges}. An exchange of a pair of items between two agents is said to be \emph{valid}, if there is no new $\mathsf{EF1}$-envy created after the exchange.

An interesting special case of a mildly disrupted fair allocation is one in which exactly one agent is $\mathsf{EF1}$-unhappy. Without loss of generality, we will always assume that agent $1$ is the $\mathsf{EF1}$-unhappy agent (if one exists). Many of our results have important consequences when restricted to this special case.

\begin{definition}[Near-$\mathsf{EF1}$ allocation]\label{def:near-ef1}
An allocation $X$ is said to be \emph{near-$\mathsf{EF1}$} if every agent is $\mathsf{EF1}$-happy in $X$ except one \emph{fixed} agent, who we refer to as the \emph{unhappy agent}. We let agent $1$ denote the unhappy agent, without loss of generality.
\end{definition}

We formally state the problem of fair division in a variable setting that forms the focus of this work.

\begin{restatable}{problem}{ef1restoration}[{\upshape$\efconn$}]
\label{prob:ef1-restoration}
Given a fair division instance and an input allocation $X$, determine whether it is possible to reach an $\mathsf{EF1}$ allocation (from $X$) by a sequence of \emph{valid transfers}.
\end{restatable}

If such a sequence of valid transfers that restores $\mathsf{EF1}$ always exists for a valuation class, we say that \emph{$\efconn$ is possible} for the given class. Next, we  define the following natural optimization problem corresponding $\efconn$.

\begin{restatable}{problem}{ktransfers}[{\upshape$\efopt$}]
\label{prob:k-valid-transfers-EF1}
   Given a fair division instance and an allocation $X$, determine the minimum number of valid transfers required to transform $X$ into an $\mathsf{EF1}$ allocation.
\end{restatable}

We are now in a position to state our main results.


\subsection{Our Main Results} \label{sec:contributions}

We begin by studying identical valuations, and prove that $\efconn$ is possible. Formally, we prove the following (in \cref{subsec:ef1-restoration-identical-monotone}).

\begin{restatable}{theorem}{thmidentical}
\label{thm:identical-general}
    Given a fair division instance with identical monotone valuations over goods and an input allocation $X$, $\efconn$ is possible. Moreover, an $\mathsf{EF1}$ allocation can be obtained after making at most $m$ valid transfers.
\end{restatable}

An analogous result (\cref{thm:identical-chores-general}) also holds for the case in which all the items are \emph{chores}, i.e., if every item has a negative marginal value for all the agents. However, a positive result does not hold in the \emph{mixed manna} setting with both goods and chores (see \cref{subsec:ef1-mixed-case}).
 
In \cref{subsec:NP-hard-opt-identical-additive}, we show that the problem $\efopt$ (\cref{prob:k-valid-transfers-EF1}) is strongly $\mathrm{NP}$-complete, even for identical additive valuations.

\begin{restatable}{theorem}{thmidenticaloptimal}
\label{thm:k-valid-transfers-EF1-NP-hard-identical-additive}
$\efopt$ is strongly $\mathrm{NP}$-complete even for identical additive valuations.
\end{restatable}

$\efopt$ is also strongly $\mathrm{NP}$-complete when all the items are chores. The proof is analogous to that of \cref{thm:k-valid-transfers-EF1-NP-hard-identical-additive}; see \cref{rem:opt-ef1-chores-np-hard}.

Next, in \cref{sec:additive-binary}, we consider $\efconn$ when all the agents have additive binary valuations and show the following result.

\begin{restatable}{theorem}{theorembinary}
\label{thm:EF1-Restoration-NP-hard-additive-binary}
Given a fair division instance with additive binary valuations over goods and an input allocation that is near-$\mathsf{EF1}$, $\efconn$ may not always be possible. In particular, the following problems are $\mathrm{NP}$-hard:
\begin{enumerate}[label={(\roman*)}]
    \item deciding whether $\efconn$ is possible;
    \item $\efopt$.
\end{enumerate}
\end{restatable}

In \cref{subsec:graphical}, we identify an interesting subclass of additive binary valuations for which $\efconn$ is possible, and there is a polynomial-time algorithm to do so optimally. Formally, we study $\mathsf{EF1}$ orientations in the setting of additive binary valuations over multigraphs, and prove the following.

\begin{restatable}{theorem}{thmgraphicalbinary}
\label{thm:binary-arb}
Given a fair division instance with additive binary valuations on multigraphs, and an input orientation, the $\efconn$ problem always admits a polynomial-time computable positive solution (\Cref{algo:identical-general}) that is an orientation.
\end{restatable}

Next, in \cref{sec:pspace}, we prove that $\efconn$ is $\ps$-complete in general, even for a subclass of binary monotone valuations in which every subset of the set of items is valued at either $0$ or $1$, and even when the input allocation is near-$\mathsf{EF1}$ (see \cref{def:near-ef1}).

\begin{restatable}{theorem}{pspace}
\label{thm:pspace}
The $\efconn$ problem is $\ps$-complete for monotone binary valuations.
\end{restatable}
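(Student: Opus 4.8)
The plan is to establish PSPACE-completeness via two directions. For membership in PSPACE, I would observe that the set of all near-$\ef1$ allocations forms a finite (though exponentially large) state space, with edges given by valid transfers. Deciding whether an $\ef1$ allocation is reachable from a given near-$\ef1$ allocation is thus a reachability question in an implicitly defined graph. Since a single allocation takes polynomial space to describe, and checking whether a given transfer is valid (i.e. preserves near-$\ef1$ and does not increase agent $1$'s $\ef1$-envy) can be done in polynomial space for monotone binary valuations provided the valuation oracle itself is space-efficient, reachability can be decided nondeterministically in polynomial space by guessing the sequence of transfers one step at a time. By Savitch's theorem, $\nps = \ps$, giving membership.

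The hard direction is PSPACE-hardness, which is where the real work lies. The natural strategy is to reduce from a known PSPACE-complete reconfiguration problem — a strong candidate is a sliding-token or reconfiguration problem such as $\pmr$ (perfect matching reconfiguration) or a nondeterministic constraint logic / sliding-block puzzle, given the macros \verb|\pmr| and \verb|\ps| already set up in the preamble. I would encode the configurations of the source problem as near-$\ef1$ allocations, design gadgets using monotone binary valuations so that the only permissible valid transfers correspond exactly to the legal moves of the reconfiguration problem, and arrange the target $\ef1$ allocation to correspond to the accepting/goal configuration. The remark in the theorem statement — that \emph{no single-item transfer may be possible} for monotone binary valuations and one must allow exchanges of a pair of items between two agents — signals that the reduction's gadgets will rely on swap-moves rather than one-directional transfers, so I would define the move primitive accordingly and build the whole simulation around paired exchanges.

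The main obstacle I expect is gadget design: with monotone binary valuations every bundle is worth only $0$ or $1$, so the $\ef1$ condition becomes very coarse, and I must engineer agents and items so that the near-$\ef1$ constraint is tight enough to forbid all transfers except those mimicking the source problem's moves, yet loose enough to permit those moves without ever passing through a disallowed allocation. In particular, I would need to verify carefully that (i) each simulated move is a valid transfer (the near-$\ef1$ property is maintained and agent $1$'s envy does not increase), and (ii) there are no ``shortcut'' transfers that let the allocation escape the intended state space and trivialise the reachability question. Establishing this locking behavior — and confirming that the correspondence between legal moves and valid transfers is exact in both directions — is the crux of the proof; the membership argument and the overall reduction framework are comparatively routine once the gadgets are correct.
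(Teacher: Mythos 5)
Your overall framework coincides with the paper's: membership via nondeterministically guessing the operation sequence (the number of allocations is at most $n^m$, so path length can be bounded) together with Savitch's theorem $\nps = \ps$, and hardness via a polynomial-time reduction from $\pmr$ in which matchings become allocations, flips become exchanges of pairs of items, and the target matching becomes the goal $\ef1$ allocation. You also correctly anticipate that with monotone binary valuations the move primitive must be an exchange rather than a one-directional transfer, which matches the paper's remark.

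However, there is a genuine gap: the entire substance of the hardness proof is the gadget construction, and your proposal only states the properties the gadgets should have (\enquote{forbid all transfers except those mimicking the source problem's moves}, \enquote{no shortcut transfers}) without exhibiting binary monotone valuations that achieve them. This is exactly where the difficulty is concentrated, and it is not routine. The paper's construction, for a $\pmr$ instance with parts $A,B$ of size $n$, initial matching $M_0=\{(a_i,b_i)\}$ and target $M^*=\{(a_i,b_{\pi(i)})\}$, requires: (i) an unhappy agent $0$ with an \emph{empty} bundle whose valuation is engineered so that $v_0$ gives value $1$ to two-element sets $\{a_i,b_j\}$ and $\{\bar{a}_i,b_j\}$ only when $j \neq \pi(i)$, so that agent $0$'s $\ef1$-envy towards agent $i$ vanishes exactly when $i$ holds $b_{\pi(i)}$ --- this is what forces \emph{every} reachable $\ef1$ allocation to encode the single matching $M^*$; (ii) duplicated items $a_i, \bar{a}_i$ in each bundle $X_i$, so that $a$-type items can never be moved or swapped; (iii) two auxiliary agents $n+1, n+2$ holding reserve items $r_1,\ldots,r_4$, whose valuations collapse to $0$ upon losing or swapping any item, which is the mechanism blocking every transfer and every exchange other than a $b$-type/$b$-type swap between agents $i,j \in [n]$; and (iv) the verification that each such swap is a valid exchange corresponding to a flip of the current matching and conversely, so that reachability is preserved in both directions. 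Without constructing such gadgets and carrying out this case analysis, the reduction --- and hence the hardness half of the theorem --- remains unproven; your proposal identifies the right target and the right source problem but stops at the point where the actual proof begins.
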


In fact, \cref{thm:pspace} holds even if \emph{valid exchanges} are allowed, apart from valid transfers. We present a reduction from the $\ps$-complete problem of $\pmr$ \cite{BonamyBHIKMMW19} (see \cref{def:perfect-matching-reconfiguration}).

Finally, in \cref{sec:identical-efx}, we consider the analogous $\efxconn$ problem, in which a stronger notion of a valid transfer is used -- there must be no new $\mathsf{EFX}$-envy in the system after each such transfer. We prove that, unlike the case of $\mathsf{EF1}$, $\efxconn$ may not always be possible even for identical additive valuations. Moreover, it is weakly $\mathrm{NP}$-hard to (i) decide if $\efxconn$ is possible, and (ii) optimize the number of valid transfers to reach $\mathsf{EFX}$.

\begin{restatable}{theorem}{thmidenticalefx}
\label{thm:EFX-Restoration-weakly-NP-hard-identical-additive}
Consider fair division instances with identical additive valuations over items (all goods or all chores) and a near-$\mathsf{EFX}$ input allocation. Then, the following problems are weakly $\mathrm{NP}$-hard:
\begin{enumerate}[label={(\roman*)}]
    \item deciding whether $\efxconn$ is possible;
    \item $\efxopt$.
\end{enumerate}
\end{restatable}

Our results are summarized in \cref{tab:main-results-summary}.


\section{\texorpdfstring{$\efconn$}{EF1-Restoration} for Identical Monotone Valuations} \label{sec:identical}

In this section, we discuss the $\efconn$ problem for the case in which all the agents have identical monotone valuations.
In \cref{subsec:ef1-restoration-identical-monotone}, we give a constructive proof (via \Cref{algo:identical-general}) that $\efconn$ is possible in this case, thereby establishing \cref{thm:identical-general}.
In \cref{subsec:NP-hard-opt-identical-additive}, we consider the problem of optimizing the number of valid transfers made in order to restore $\mathsf{EF1}$, when the valuation is additive. We prove that this problem is strongly $\mathrm{NP}$-complete (see \hyperref[thm:k-valid-transfers-EF1-NP-hard-identical-additive-restated]{\Cref*{thm:k-valid-transfers-EF1-NP-hard-identical-additive}}).


\subsection{A Greedy Paradigm for \texorpdfstring{$\efconn$}{EF1-Restoration}} \label{subsec:ef1-restoration-identical-monotone}

In this section, we design a greedy polynomial-time algorithm for $\efconn$ when agents have identical monotone valuations. To this end, we first describe and analyze (in \cref{subsubsec:identical-basic}) our algorithm for the special case where the input allocation is \emph{near-$\mathsf{EF1}$}, i.e., exactly one agent is $\mathsf{EF1}$-unhappy. The case of an arbitrary input allocation is similar, and is discussed in \cref{subsubsec:identical-general}. A similar algorithm also works when all the items are \emph{chores}, and is discussed in \cref{subsec:identical-chores-general}. However, $\efconn$ may not be possible when the set of items consists of a mix of goods and chores -- the case of \emph{mixed manna}. A negative instance (\cref{example:identical-ef1-mixed-counterexample}) is presented in \cref{subsec:ef1-mixed-case}.

\subsubsection{\texorpdfstring{$\efconn$}{EF1-Restoration} for Near-\texorpdfstring{$\mathsf{EF1}$}{EF1} Allocations} \label{subsubsec:identical-basic}

We begin by describing our $\efconn$ algorithm (\Cref{algo:identical-basic}) when the input allocation $X$ is \emph{near-$\mathsf{EF1}$} wherein exactly one agent (say, agent $1$) is $\mathsf{EF1}$-unhappy. 
 At each step of \Cref{algo:identical-basic}, an agent $i \neq 1$ is chosen who would be \emph{least affected} by transferring her most valuable item (from her current bundle) to agent $1$. We denote this item of highest marginal value by $\gb{X}{i}$ and define it as $\gb{X}{i} \ceq \underset{g \in X_i}{\arg\min} \set{v(X_i \sm g)}$. The crux of our proof is to show that the above transfer is \emph{valid} (\cref{def:valid-transfer}). 

\begin{algorithm}[htbp]
\caption{\texorpdfstring{$\efconn$}{EF1-Restoration} of a near-$\mathsf{EF1}$ allocation for identical valuations} \label{algo:identical-basic}
\SetAlgoLined
\SetKwInOut{Input}{Input}
\SetKwInOut{Output}{Output}
\SetKwFunction{FMain}{EF1\_Restorer\_Basic}
\SetKwProg{Fn}{}{:}{}
\SetKw{KwTo}{to}
\SetKw{KwRet}{return}
\DontPrintSemicolon

\Input{\text{ } A fair division instance $\mathcal{I}=\left<[n], \G, v \right>$ with an identical monotone valuation $v$ and a near-$\mathsf{EF1}$ allocation $X = (X_1,\ldots,X_n)$ with agent $1$ being unhappy.}
\Output{\text{ } An $\mathsf{EF1}$ allocation in $\mathcal{I}$.}
\BlankLine

\Fn{\FMain{$X$}}{
    $S \leftarrow \text{agents that agent $1$ $\mathsf{EF1}$-envies in } X$\;
    \While{$S \neq \emptyset$}{
        $i \leftarrow \underset{\ell \in [n] \sm \{1\}}{\arg\max} \set{v(X_{\ell} \sm \gb{X}{\ell})}$\;\label{line:choice-of-i-basic}
\BlankLine
\makebox[0.8\linewidth][l]{$\left.\begin{array}{l}
    X_i \leftarrow X_i \sm \gb{X}{i} \\
    X_1 \leftarrow X_1 \cup \gb{X}{i}
\end{array}
\right\} \: \text{Transfer } \gb{X}{i} \text{ from } i \text{ to }1$}\label{line:transfer-best-basic}
\BlankLine
    Update $S$
}
\KwRet{$X = (X_1, \ldots, X_n)$}\;
}
\end{algorithm}

\begin{lemma}\label{lem:identical-basic-correctness}
$\efconn$ is possible for identical monotone valuations, when the input allocation is near-$\mathsf{EF1}$.
\end{lemma}

\begin{proof}
 Let $X = (X_1,\ldots,X_n)$ be the input near-$\mathsf{EF1}$ allocation. We write $v$ to denote the monotone valuation that is common to all the agents. Let $i \coloneq \underset{\ell \in [n]\setminus\{1\}}{\arg\max} \set{v(X_{\ell} \sm \gb{X}{\ell})}$. That is, in the allocation $X$, agent $i$ is the one (other than agent $1$, of course) who would possess the most valuable bundle, even after losing her most valuable item. As indicated in \linkline{line:transfer-best-basic} of \Cref{algo:identical-basic}, we transfer $\gb{X}{i}$ from $i$ to agent $1$, and denote the resulting allocation by $Y$. We claim that this is a valid transfer, i.e., $Y$ is near-$\mathsf{EF1}$ and this transfer does not lead to an increase in the amount of $\mathsf{EF1}$-envy (\cref{def:amount-of-ef1-envy}) that agent $1$ has towards any other agent.

\begin{itemize}
    \item Observe that agent $i$ does not $\mathsf{EF1}$-envy any agent $j\neq 1$ after the transfer. We have $Y_1 = X_1 \cup \gb{X}{i}$, $Y_i = X_i \sm \gb{X}{i}$, and $Y_j = X_j$ for all $j \in [n] \sm \set{1,i}$. Then, we have, for each $j \notin \set{1,i}$,
\begin{align} \label{eqn:i_doesn't_ef1_envy_j-basic}
v(Y_i) &= v(X_i \sm \gb{X}{i}) \geq v(X_j \sm \gb{X}{j}) = v(Y_j \sm \gb{Y}{j}),
\end{align}

where the inequality in \eqref{eqn:i_doesn't_ef1_envy_j-basic} follows from the choice of $i$ in \linkline{line:choice-of-i-basic} of \Cref{algo:identical-basic}. Hence, agent $i$ does not $\mathsf{EF1}$-envy any agent $j \neq 1$ after the transfer.

\item Next, we prove that no agent $\mathsf{EF1}$-envies agent $1$ after this transfer (i.e., in $Y$). Note that the transfer was made because agent $1$ had $\mathsf{EF1}$-envy towards some agent, say agent $2$ in $X$. So, for each $j \notin \set{1,i}$, we have the following:

\begin{align} \label{eqn:j_doesn't_ef1_envy_1-basic}
v(Y_j)=v(X_j) \geq v(Y_i) &= v(X_i \setminus \gb{X}{i}) \tag{$X$ is near-$\mathsf{EF1}$}\\
& \geq v(X_2 \setminus \gb{X}{2}) \tag{by our choice of $i$}\\
&> v(X_1), \tag{agent $1$ $\mathsf{EF1}$-envies agent $2$}
\nonumber
\end{align}
and
\begin{align*}
    v(X_1) = v(Y_1 \setminus \gb{X}{i}) \geq v(Y_1 \setminus \gb{Y}{1}). \tag{since $X_1 \subsetneq Y_1$}
\end{align*}

Therefore, we have $v(Y_j) \geq v(Y_1 \setminus \gb{Y}{1})$ for all $j \neq 1$.

\item Observe that agent $1$ must have $\mathsf{EF1}$-envied agent $i$ in $X$. As assumed earlier, let agent $2$ be one of the agents who is $\mathsf{EF1}$-envied by agent $1$ in $X$. So, $v(X_2\setminus \gb{X}{2})>v(X_1)$. But, since agent $i$ was chosen for the transfer, we must have had $v(X_i \sm \gb{X}{i})\geq v(X_2 \sm \gb{X}{2}) > v(X_1)$.

\item Since $1$ has gained an item, the $\mathsf{EF1}$-envy agent $1$ has towards any other agent in $Y$ cannot be larger than that in $X$. 

\end{itemize}
Overall, we have shown that the transfer in \linkline{line:transfer-best-basic} is valid and results in an increase of one item in the bundle of agent $1$ (as long as agent $1$ is not $\mathsf{EF1}$-happy). Hence, \Cref{algo:identical-basic} eventually terminates with an $\mathsf{EF1}$ allocation.
\end{proof}


\subsubsection{\texorpdfstring{$\efconn$}{EF1-Restoration} for Arbitrary Allocations} \label{subsubsec:identical-general}

In this section, we prove \hyperref[thm:identical-general-restated]{\Cref*{thm:identical-general}}, which we restate here for convenience.

\thmidentical*
\label{thm:identical-general-restated}

The $\efconn$ algorithm \Cref{algo:identical-general} that we describe for an arbitrary input allocation is similar to \Cref{algo:identical-basic}. Instead of agent $1$, the recipient of an item at each iteration will simply be one of the poorest agents at the beginning of that iteration. The agent who transfers an item will still be one who would be least affected by giving away her most valuable item at the beginning of that iteration.

\begin{algorithm}[htbp]
\caption{\texorpdfstring{$\efconn$}{EF1-Restoration} for identical valuations} \label{algo:identical-general}
\SetAlgoLined
\SetKwInOut{Input}{Input}
\SetKwInOut{Output}{Output}
\SetKwFunction{FMain}{EF1\_Restorer\_Goods}
\SetKwProg{Fn}{}{:}{}
\SetKw{KwTo}{to}
\SetKw{KwRet}{return}
\DontPrintSemicolon

\Input{\text{ } A fair division instance $\mathcal{I}=\left<[n], \G, v \right>$ with identical monotone (positive) valuation $v$ and a starting allocation $X = (X_1, \ldots, X_n)$.}
\Output{\text{ } An $\mathsf{EF1}$ allocation in $\mathcal{I}$.}
\BlankLine

\Fn{\FMain{$X$}}{
    \While{$X$ is not $\mathsf{EF1}$}{
        $i \leftarrow \underset{\ell \in [n]}{\arg\max} \set{v(X_{\ell} \sm \gb{X}{\ell})}$\;\label{line:choice-of-i-general}
        $j \leftarrow \underset{\ell \in [n]}{\arg\min} \set{v(X_{\ell})}$\; \label{line:choice-of-j-general}
\BlankLine
\makebox[0.8\linewidth][l]{$\left.\begin{array}{l}
    Y_i \leftarrow X_i \sm \gb{X}{i} \\
    Y_j \leftarrow X_j \cup \gb{X}{i}
\end{array}
\right\} \: \text{Transfer } \gb{X}{i} \text{ from } i \text{ to }j$}\label{line:transfer-best-general}
\BlankLine
    Set $X\leftarrow Y$, where $Y_k=X_k, \ \forall k\notin\{i,j\}$
}
\KwRet{$X = (X_1, \ldots, X_n)$}\;
}
\end{algorithm}

We begin by making some simple, yet crucial, observations, that will finally lead to establishing the correctness of \Cref{algo:identical-general}, and bounding the number of transfers that the algorithm makes before it reaches an $\mathsf{EF1}$ allocation.

\begin{lemma} \label{lem:i-happy-j-ef1-envied-general}
Let $Z = (Z_1,\ldots,Z_n)$ be an arbitrary allocation at the beginning of any iteration of \Cref{algo:identical-general}. Let $i \coloneq \underset{\ell \in [n]}{\arg\max} \set{v(Z_{\ell} \sm \gb{Z}{\ell})}$, and let $j \coloneq \underset{\ell \in [n]}{\arg\min} \set{v(Z_{\ell})}$. Then, (i) no agent envies agent $j$, (ii) agent $i$ does not $\mathsf{EF1}$-envy any other agent, and (iii) if $Z$ is not an $\mathsf{EF1}$ allocation, then agent $j$ $\mathsf{EF1}$-envies agent $i$.
\end{lemma}

\begin{proof}
(i) This directly follows from our choice of $j$ (\linkline{line:choice-of-j-general}), as $j$ is the poorest agent in $Z$.

(ii) Let $\ell \in [n]$. Then, $v(Z_i) \geq v(Z_i \sm \gb{Z}{i}) \geq v(Z_{\ell} \sm \gb{Z}{\ell})$, by our choice of $i$ (\linkline{line:choice-of-i-general}). Hence, for all $\ell \in [n]$, agent $i$ does not $\mathsf{EF1}$-envy agent $\ell$.

(iii) Let us assume, to the contrary, that $Z$ is not an $\mathsf{EF1}$ allocation, and agent $j$ does not $\mathsf{EF1}$-envy agent $i$. Then, for all $\ell,\ell' \in [n]$, we have $v(Z_{\ell}) \geq v(Z_j) \geq v(Z_i \sm \gb{Z}{i}) \geq v(Z_{\ell'} \sm \gb{Z}{\ell'})$, where the first and last inequalities follow from our choices of $j$ and $i$ respectively. That is, for any two agents $\ell$ and $\ell'$, $\ell$ does not $\mathsf{EF1}$-envy $\ell'$ in $Z$. Thus $Z$ is an $\mathsf{EF1}$ allocation, leading to a contradiction. Hence $j$ $\mathsf{EF1}$-envies $i$ in $Z$.
\end{proof}

\begin{lemma}\label{lem:every-transfer-valid}
Each transfer made by \Cref{algo:identical-general} is a valid transfer. Furthermore, as long as we do not reach an $\mathsf{EF1}$ allocation, there exists a valid transfer and \Cref{algo:identical-general} executes one such transfer.
\end{lemma}

\begin{proof}
Let $Z = (Z_1,\ldots,Z_n)$ be the allocation at the beginning of any iteration. Let $i \ceq \underset{k \in [n]}{\arg\max} \set{v(Z_k \sm \gb{Z}{k})}$, and let $j \ceq \underset{k \in [n]}{\arg\min} \set{v(Z_k)}$. Denote the amount of $\mathsf{EF1}$-envy from $\ell$ to $\ell'$ in $Z$ by ${\varepsilon}_{\ell \to \ell'}^{Z} \coloneq v(Z_{\ell'} \sm \gb{Z}{\ell'}) - v(Z_{\ell})$. Thus, $\ell$ $\mathsf{EF1}$-envies $\ell'$ in $Z$ iff ${\varepsilon}_{\ell \to \ell'}^{Z} > 0$.

In each iteration, \Cref{algo:identical-general} transfers $\gb{Z}{i}$ from agent $i$ to agent $j$ (\linkline{line:transfer-best-general}). Let the resulting allocation be called $Y$. We first prove that this is a valid transfer, i.e., for any pair of agents $\ell, \ell' \in [n]$, 
${\varepsilon}_{\ell \to \ell'}^{Y} \leq {\varepsilon}_{\ell \to \ell'}^{Z}$. This is obvious for $\ell = j$ and/or $\ell' = i$, since $v(Y_i) < v(Z_i)$ and $v(Y_j) > v(Z_j)$. Also, it is obvious when $\ell,\ell'\in [n] \sm \{i,j\}$ as their bundles remain unchanged. Thus, we only need to show that no agent $\mathsf{EF1}$-envies $j$, and $i$ does not $\mathsf{EF1}$-envy anyone after the transfer.

\begin{itemize}
\item \emph{$\mathsf{EF1}$-envy from $\ell \; (\neq i)$ towards $j$:} 
\begin{equation} \label{eqn:l-to-j-ef1-envy-general}
{\varepsilon}_{\ell \to j}^{Y} = v(Y_{j} \sm \gb{Y}{j}) - v(Y_{\ell}) \leq v(Y_j \sm \gb{Z}{i}) - v(Z_{\ell}) = v(Z_j) - v(Z_{\ell}) \leq 0.    
\end{equation}
Therefore, no agent $\ell \; (\neq i)$ $\mathsf{EF1}$-envies $j$ after the transfer of $\gb{Z}{i}$ from $i$ to $j$.

\item \emph{$\mathsf{EF1}$-envy from $i$ to $\ell' \; (\neq j)$:} From the choice of $i$ (\linkline{line:choice-of-i-general}), we get
\begin{equation} \label{eqn:i-to-l'-ef1-envy-general}
{\varepsilon}_{i \to \ell'}^{Y} = v(Y_{\ell'} \sm \gb{Y}{\ell'}) - v(Y_{i}) = v(Z_{\ell'} \sm \gb{Z}{\ell'}) - v(Y_i) = v(Z_{\ell'}\sm \gb{Z}) - v(Z_i \sm \gb{Z}{i}) \leq 0.    
\end{equation}

\item \emph{$\mathsf{EF1}$-envy from $i$ to $j$:}
\begin{equation} \label{eqn:i-to-j-ef1-envy-general}
{\varepsilon}_{i \to j}^{Y} = v(Y_{j} \sm \gb{Y}{j}) - v(Y_{i}) \leq v(Y_j \sm \gb{Z}{i}) - v(Y_i) = v(Z_j) - v(Z_i \sm \gb{Z}{i}) \leq 0.    
\end{equation}
Therefore, $i$ does not $\mathsf{EF1}$-envy $j$ after transferring $\gb{Z}{i}$ to $j$.
\end{itemize}

This shows that \Cref{algo:identical-general} does not introduce any new $\mathsf{EF1}$-envy between any pair of agents after any transfer. Thus, each transfer is a valid transfer.

By part (iii) of \cref{lem:i-happy-j-ef1-envied-general}, there always exists a valid transfer that can be made as long as the allocation is not $\mathsf{EF1}$ -- namely, the transfer that \Cref{algo:identical-general} makes at each iteration. This completes the proof.
\end{proof}

Using \cref{lem:every-transfer-valid}, we prove that \Cref{algo:identical-general} terminates with an $\mathsf{EF1}$ allocation after making at most $m$ (valid) transfers.

\begin{corollary} \label{cor:termination-of-identical-general}
    Every item is transferred at most once during the execution of \Cref{algo:identical-general}. Hence, \Cref{algo:identical-general} terminates with an $\mathsf{EF1}$ allocation after making at most $m$ (valid) transfers.
\end{corollary}

\begin{proof}
    Consider the allocation $X^{(r)}$ at the beginning of iteration $r$ in the execution of \Cref{algo:identical-general}.
    Let $\mathcal{U}^{(r)} \subset [n]$ denote the set of $\mathsf{EF1}$-unhappy agents, $\mathcal{H}^{(r)} \coloneq [n] \setminus \mathcal{U}^{(r)}$ the set of $\mathsf{EF1}$-happy agents, and $\mathcal{E}^{(r)}$ the set of $\mathsf{EF1}$-envied agents, in $X^{(r)}$. From \cref{lem:every-transfer-valid}, the sets of $\mathsf{EF1}$-unhappy and $\mathsf{EF1}$-envied agents never grow, i.e., $\mathcal{U}^{(r)} \subseteq \mathcal{U}^{(s)}$, and $\mathcal{E}^{(r)} \subseteq \mathcal{E}^{(s)}$, whenever $r < s$. Moreover, by \cref{lem:i-happy-j-ef1-envied-general}, a transfer always takes place from an agent in $\mathcal{H}^{(r)} \cap \mathcal{E}^{(r)}$ to an agent in $\mathcal{U}^{(r)} \setminus \mathcal{E}^{(r)}$ at iteration $r$. Thus, no agent who receives an item during the course of \Cref{algo:identical-general} ever transfers an item, and vice versa. Thus, each item is transferred at most once over the course of the execution of \Cref{algo:identical-general}. This completes the proof.
\end{proof}

\cref{lem:every-transfer-valid} and \cref{cor:termination-of-identical-general} together establish \hyperref[thm:identical-general-restated]{\Cref*{thm:identical-general}}.


\subsubsection{ \texorpdfstring{$\efconn$}{EF1-Restoration} for Chores} \label{subsec:identical-chores-general}

In this section, we give an algorithm (\Cref{algo:identical-chores-general}) for the $\efconn$ problem for identical monotone valuations, in the case where the items to be allocated are all \emph{chores}, i.e., every item has negative marginal value for each agent. This algorithm is analogous to \Cref{algo:identical-general}, and the only changes one must make in this setting is to modify the definition of $\mathsf{EF1}$-envy, which we shall now see.

Formally, consider the fair division instance $\tilde{\mathcal{I}} \ceq \left<[n], \mathcal{C}, \set{v_i \colon 2^{\mathcal{C}} \to {\mathbb{R}}_{<0}}_{i \in [n]}\right>$, where $\mathcal{C}$ is the set of \emph{chores}, each of which has a negative marginal value for every agent (i.e., for all $i \in [n]$, $S \subseteq \mathcal{C}$, $c \in \mathcal{C} \sm S$, we have $v_i(S \cup c) - v_i(S) < 0$), and are to be fairly allocated among the agents $[n]$. Let $m \coloneq \abs{\mathcal{C}}$. Then, we have the following definitions. Envy is defined exactly the same way as in the case of goods (see \cref{def:ef1-envy}).

\begin{definition}[$\mathsf{EF1}$-envy in the case of chores]\label{def:ef1-envy-chores}
An agent $i$ is said to \emph{$\mathsf{EF1}$-envy} an agent $j$ if, for all $c \in X_i$, we have $v_i(X_i \sm c) < v_i(X_j)$. That is, $i$ continues to envy $j$ even after virtually \emph{removing any one chore} from her own bundle. 
We say $i$ is \emph{$\mathsf{EF1}$-happy} in $X$ if she does not $\mathsf{EF1}$-envy any other agent in $X$.
\end{definition}

We may analogously modify the definition of the amount of $\mathsf{EF1}$-envy felt by one agent for another.

\begin{definition}[Amount of $\mathsf{EF1}$-envy in the case of chores]\label{def:amount-of-ef1-envy-chores}
Let $X = (X_1,\ldots,X_n)$ be an allocation of chores. The amount of $\mathsf{EF1}$-envy that agent $i$ has towards agent $j$ in $X$, denoted by $\e{i\to j}(X)$, is defined as $\e{i\to j}(X) \ceq \underset{c \in X_i}{\min} \set{v_i(X_j)-v_i(X_i \sm c)}$.\end{definition}

With these modified definitions, we may define an $\mathsf{EF1}$ allocation exactly as we did in \cref{def:ef1}.

\begin{definition}[Envy-freeness up to one item ($\mathsf{EF1}$) for chores]\label{def:ef1-chores}
An allocation $X$ of chores is said to be $\mathsf{EF1}$ if every agent is $\mathsf{EF1}$-happy in $X$.
\end{definition}

Note that $X$ is an $\mathsf{EF1}$ allocation if and only if $\e{i \to j}(X)\leq 0$ for each $i,j\in [n]$, as in the case of goods.

The goal of this section is to prove the following theorem.

\begin{theorem} \label{thm:identical-chores-general}
    Given a fair division instance on chores with an identical monotone valuation $v \colon 2^{\mathcal{C}} \to \mathbb{R}_{<0}$ over chores and an input allocation $X$, $\efconn$ is possible.
\end{theorem}

At each iteration of our $\efconn$ algorithm \Cref{algo:identical-chores-general} for chores with an identical monotone valuation $v$, we pick an agent $i$ who would possess the least valuable bundle even after losing her least liked chore, and ease her burden by transferring such a chore $\cw{X}{i}$ to any of the richest agents $j$. Formally, $\cw{X}{i}$ is a chore in agent $i$'s bundle $X_i$ that has the most negative marginal value for all the agents, i.e., $\cw{X}{i} \ceq \underset{c \in X_i}{\arg\max} \set{v(X_i \sm c)}$.

In \cref{lem:identical-chores-general-correctness}, we prove that such a transfer is valid, and makes progress towards reaching an $\mathsf{EF1}$ allocation.

\begin{algorithm}[htbp]
\caption{\texorpdfstring{$\efconn$}{EF1-Restoration} for chores with identical valuations} \label{algo:identical-chores-general}
\SetAlgoLined
\SetKwInOut{Input}{Input}
\SetKwInOut{Output}{Output}
\SetKwFunction{FMain}{EF1\_Restorer\_Chores}
\SetKwProg{Fn}{}{:}{}
\SetKw{KwTo}{to}
\SetKw{KwRet}{return}
\DontPrintSemicolon

\Input{\text{ }A fair division instance $\tilde{\mathcal{I}} = \left<[n], \mathcal{C}, v \right>$ with an identical monotone valuation $v \colon 2^{\mathcal{C}} \to \mathbb{R}_{<0}$, and a starting allocation $X = (X_1, \ldots, X_n)$ of the set $\mathcal{C}$ of chores.}
\Output{\text{ }An $\mathsf{EF1}$ allocation in $\mathcal{I}$.}
\BlankLine

\Fn{\FMain{$X$}}{
    \While{$X$ is not $\mathsf{EF1}$}{
        $i \leftarrow \underset{\ell \in [n]}{\arg\min} \set{v(X_{\ell} \sm \cw{X}{\ell})}$\;\label{line:choice-of-i-chore}
        $j \leftarrow \underset{\ell \in [n]}{\arg\max} \set{v(X_{\ell})}$\; \label{line:choice-of-j-chore}
\BlankLine
\makebox[0.8\linewidth][l]{$\left.\begin{array}{l}
    Y_i \leftarrow X_i \sm \cw{X}{i} \\
    Y_j \leftarrow X_j \cup \cw{X}{i}
\end{array}
\right\} \: \text{Transfer } \cw{X}{i} \text{ from } i \text{ to }j$}\label{line:transfer-chore}
\BlankLine
    Set $X \leftarrow Y$, where $Y_k=X_k, \ \forall \ k\notin\{i,j\}$
}
\KwRet{$X = (X_1, \ldots, X_n)$}\;
}
\end{algorithm}

\begin{lemma} \label{lem:j-happy-i-ef1-envies-chores-general}
Let $X = (X_1,\ldots,X_n)$ be an arbitrary allocation of chores. Let $i \coloneq \underset{\ell \in [n]}{\arg\min} \set{v(X_{\ell} \sm \cw{X}{\ell})}$, and let $j \coloneq \underset{\ell \in [n]}{\arg\max} \set{v(X_{\ell})}$. Then, (i) agent $j$ does not envy any other agent, (ii) agent $i$ is not $\mathsf{EF1}$-envied by any other agent, and (iii) if $X$ is not an $\mathsf{EF1}$ allocation, then agent $i$ $\mathsf{EF1}$-envies agent $j$.
\end{lemma}

\begin{proof}
The proof is analogous to the proof of \cref{lem:i-happy-j-ef1-envied-general} and is omitted.
\end{proof}

\begin{lemma}\label{lem:identical-chores-general-correctness}
$\efconn$ for chores is possible, for identical monotone valuations.
\end{lemma}

\begin{proof}
 We claim that \Cref{algo:identical-chores-general} is an $\efconn$ algorithm for chores with identical monotone valuations. Let $X = (X_1,\ldots,X_n)$ be the input allocation, with a negative valuation $v$ that is common to all the agents. Let $i \coloneq \underset{\ell \in [n]}{\arg\min} \set{v(X_{\ell} \sm \cw{X}{\ell})}$, and let $j \coloneq \underset{\ell \in [n]}{\arg\max} \set{v(X_{\ell})}$. As indicated in \linkline{line:transfer-chore} of \Cref{algo:identical-chores-general}, we transfer $\cw{X}{i}$ from agent $i$ to agent $j$, and denote the resulting allocation by $Y$. We claim that this is a valid transfer. To this end, it is enough to verify that no agent $\mathsf{EF1}$-envies $i$, and agent $j$ does not $\mathsf{EF1}$-envy anybody after this transfer. We have

\begin{itemize}

\item \emph{$\mathsf{EF1}$-envy from $\ell \in [n] \sm \set{i,j}$ to $i$:} We have
\begin{equation} \label{eqn:l-to-i-ef1-envy-chores}
{\varepsilon}_{\ell \to i}^{Y} =  v(Y_{i}) - v(Y_{\ell} \sm \cw{Y}{\ell})= v(X_i \sm \cw{X}{i}) - v(X_{\ell} \sm \cw{X}{\ell}) \leq  0.
\end{equation}

The last inequality follows from our choice of $i$.
Therefore, no agent $\ell \in [n] \sm \set{i,j}$ $\mathsf{EF1}$-envies $i$ after the transfer of $\cw{X}{i}$ from $i$ to $j$.

\item \emph{$\mathsf{EF1}$-envy from $j$ to $i$:} We have
\begin{align}
{\varepsilon}_{j \to i}^{Y} = v(Y_{i}) - v(Y_{j} \sm \cw{Y}{j}) &= v(X_i \sm \cw{X}{i})-v((X_j \cup \cw{X}{i}) \sm \cw{Y}{j}) \notag
\\&\leq v(X_i) - v(X_j \sm \cw{Y}{j}) \notag
\\&\leq v(X_i)-v(X_j \sm \cw{X}{j})   = {\varepsilon}_{j \to i}^{X} \leq 0. \label{eqn:j-to-i-ef1-envy-chores}
\end{align}

Therefore, agent $j$ does not $\mathsf{EF1}$-envy agent $i$, even after receiving $\cw{X}{i}$ from agent $i$.

\item \emph{$\mathsf{EF1}$-envy from $j$ to $\ell \in [n] \sm \set{i,j}$}: We have
\begin{align}
{\varepsilon}_{j \to \ell}^{Y} = v(Y_{\ell})-v(Y_{j} \sm \cw{Y}{j}) &= v(X_{\ell})-v((X_j \cup \cw{X}{i}) \sm \cw{Y}{j})   \notag
\\&\leq v(X_{\ell})-v(X_j \sm \cw{Y}{j})  \notag
\\&\leq v(X_{\ell})-v(X_j \sm \cw{X}{j})  = {\varepsilon}_{j \to \ell}^{X} \leq 0, \label{eqn:j-to-l-ef1-envy-chores}
\end{align}

by part (i) of \cref{lem:j-happy-i-ef1-envies-chores-general}.Therefore, agent $j$ does not $\mathsf{EF1}$-envy any agent $\ell \neq i$, even after receiving $\cw{X}{i}$ from agent $i$.
\end{itemize}

This shows that \Cref{algo:identical-chores-general} does not introduce any new $\mathsf{EF1}$-envy among any pair of agents, after each transfer that it initiates. That is, every transfer that it initiates is \emph{valid}.

The proof that \Cref{algo:identical-chores-general} terminates with an $\mathsf{EF1}$ allocation follows from \cref{lem:j-happy-i-ef1-envies-chores-general} and is analogous to \cref{lem:every-transfer-valid} and \cref{cor:termination-of-identical-general}.
\end{proof}

It is straightforward to see that all the algorithms described in this section (\cref{subsec:ef1-restoration-identical-monotone}) run in polynomial time.


\subsection{\texorpdfstring{$\efopt$}{Optimal-EF1-Restoration} for Identical Additive Valuations} \label{subsec:NP-hard-opt-identical-additive}

When agents have identical valuations over goods, \cref{algo:identical-general} computes, in polynomial time, a sequence of \emph{valid transfers} that transforms an arbitrary input allocation into an $\mathsf{EF1}$ allocation. Moreover, $\mathsf{EF1}$ is restored after at most $m$ valid transfers, since each item is transferred at most once during the course of \cref{algo:identical-general} (see \cref{lem:every-transfer-valid} and \cref{cor:termination-of-identical-general}).

This naturally leads to an optimization question: given an arbitrary allocation, what is the minimum number of \emph{valid} transfers required to reach an $\mathsf{EF1}$ allocation? We refer to this problem as $\efopt$, which we restate here for convenience.

\ktransfers*
\label{prob:k-valid-transfers-EF1-restated}

The following result shows that $\efopt$ is strongly $\mathrm{NP}$-complete even for identical additive valuations.

\thmidenticaloptimal*
\label{thm:k-valid-transfers-EF1-NP-hard-identical-additive-restated}

\begin{proof}
We refer to the decision problem here: whether $\efconn$ is possible in at most $k$ valid transfers.
It is easy to see that this decision problem belongs to $\mathrm{NP}$, since, given a sequence of at most $k$ transfers, one can verify in polynomial time that every transfer is valid and that the resulting allocation is $\mathsf{EF1}$. Note that $k\leq m$ from \cref{cor:termination-of-identical-general}.
To show $\mathrm{NP}$-hardness, we give a polynomial-time reduction from the $3$-\textsc{Partition} problem defined below.

\begin{definition}[$3$-\textsc{Partition}]
Given positive integers $r$ and $B$, and a set of $3r$ positive integers $S = \{a_1,\ldots,a_{3r}\}$ such that $\sum_{i=1}^{3r} a_i = rB$, determine if $S$ can be partitioned into $r$ subsets $S_1,\ldots,S_r$ such that, for each $i \in [r]$, $\abs{S_i} = 3$, and $\sum_{x \in S_i} x = B$.
\end{definition}

$3$-\textsc{Partition} is strongly $\mathrm{NP}$-complete and remains so even if $B/4 < a_i < B/2$ for all $i \in [3r]$ (see \cite{GareyJ79}, Theorem 4.4, Page 99). 

\noindent{\bf Construction of the instance of $\efopt$: } Consider an instance $\mathcal{S} \coloneq (a_1,\ldots,a_{3r};B)$ of $3$-\textsc{Partition}, where $\sum_{i=1}^{3r} a_i = rB$, and $B/4 < a_i < B/2$ for all $i \in [3r]$. 
Let $\lambda \coloneq 3r^2+4r+1 \text{ and } T \coloneq \lambda B$. We construct an instance $\mathcal{I(S)}$ of $\efopt$ as follows.

\begin{itemize}
\item Set of agents $=\{s, 1,\ldots,r\}$. 
\item Set of items $\mathcal{M} = G \cup H$, where $G = \{g_i \mid i \in [3r]\}$, and $H = \{h_1,\ldots,h_T\}$. 
\item Valuation (identical additive): $v(g_i) \coloneq \lambda a_i$, for each $i\in [3r]$, $v(h_i) \coloneq 1$ for each $i \in [T]$. Here $a_i$, $i\in [3r]$ are integers from the $3$-\textsc{Partition} instance.
\item Initial allocation: $X = (\mathcal{M},\emptyset,\ldots,\emptyset)$, where $X_s = \mathcal{M}$.
\item Valid transfer budget is $k \coloneq 3r$.
\end{itemize}

Next, we show that $\mathcal{S}$ is a \textsc{Yes}-instance of $3$-\textsc{Partition} if and only if $\mathcal{I(S)}$ is a \textsc{Yes}-instance of $\efopt$.\\

\noindent{\bf $3$-\textsc{Partition} $\implies$ $\efopt$:}

First, suppose $\mathcal{S} \coloneq (a_1,\ldots,a_{3r};B)$ is a \textsc{Yes}-instance of $3$-\textsc{Partition}. Then, there exists a partition $\{S_1,\ldots,S_r\}$ of $[3r]$ such that $\abs{S_j} = 3$ and $\sum_{i \in S_j} a_i = B$ for every $j \in [r]$. Consider the corresponding allocation $X' = (H,X'_1,\ldots,X'_r)$ in the instance $\mathcal{I(S)}$, where $X'_i=\{g_j \mid j\in S_i\}$ for $1 \leq i \leq 3r$. We prove that $X'$ is $\mathsf{EF1}$ and can be reached from the initial allocation $X$ by performing $k$-many valid transfers.

\begin{claim}\label{clm:valid-k}
The allocation $X'$ is $\mathsf{EF1}$, and can be reached from $X$ by a sequence of $k$ valid transfers.
\end{claim}
\begin{proof}
The allocation $X'$ is $\mathsf{EF1}$, and in fact $\mathsf{EF}$, since $v(X'_i)=v(X'_s)=T$ for each $i \in [r]$.

To begin with, recall that in the initial allocation $X$, we have $X_s = \mathcal{M}$ and $X_i = \emptyset$ for all $i \in [r]$. We construct a sequence of $3r = k$ valid transfers that results in the allocation $X'$; all these transfers are from agent $s$.

The first $r$ transfers goes as follows: 
for every $i \in [r]$, choose any one item listed in $X'_i$ and transfer it from agent $s$ to agent $i$.
It is trivial to see that these $r$ transfers are valid since for each agent in $1,\ldots,r$, their current bundle consists of a single item, so they cannot be $\mathsf{EF1}$-envied by anyone.

For the subsequent $2r$ transfers, until $X'$ is reached, we repeatedly choose an agent $i \in [r]$ (breaking ties arbitrarily) that has a minimum-valued bundle,
and transfer the next item as listed in $X'_i$ to her (from agent $s$).
Note that any agent $i \in [r]$ who has already received all three items listed in $X'_i$ has a utility of exactly $T$, whereas any agent $i \in [r]$ who is yet to receive $X'_i$ completely has a utility strictly less than $T$. This is because each $a_i < \frac{B}{2}$. So the agent $i$ with a minimum-valued bundle is among those who have not yet received their respective bundle $X'_i$.

Each such transfer is valid, as the removal of the most recently added item from the bundle of agent $i$ removes any envy towards agent $i$, and moreover, agent $s$ never has a bundle of value less than $T$, and hence does not envy anyone.
\end{proof}

\noindent{\bf $\efopt$ $\implies$ $3$-\textsc{Partition}:}

Conversely, suppose that $\mathcal{I(S)}$ is a \textsc{Yes}-instance of $\efopt$. Then, there is an $\mathsf{EF1}$ allocation $X'$ that can be obtained from the initial allocation $X$ using at most $k = 3r$ valid transfers. We show in \cref{clm:identical-converse} below that $X'$ can only be obtained by transferring all the items from $G$ and none of the items from $H$ from agent $s$ to others.

\begin{claim}\label{clm:identical-converse}
In $X'$, $X'_s=H$.
\end{claim}
\begin{proof}
For the sake of contradiction, let $X'_s\cap G\neq \emptyset$. Clearly, $X'_s=G$ cannot be true in an $\mathsf{EF1}$ allocation. Let us write $R=v(X'_s\cap G)>0$ and $u = v(H\cap X'_s)$. Thus, $v(X'_s)=R+u$ and we have the following:

\begin{equation}
    v(X'_1\cup\ldots\cup X'_r)=r\lambda B-R+T-u=rT-R+T-u\label{eqn:others}
\end{equation}
Note that, $v(X'_s\setminus \{x\})\geq u-1$ where $x$ is the largest valued item in $X'_s$. Hence, for $X'$ to be $\mathsf{EF1}$, we must have $v(X'_j)\geq u-1$ for each $j\in [r]$. Therefore,
\begin{equation}\label{eqn:lower-bound-others}
    v(X'_1 \cup \ldots \cup X'_r) \geq r(u-1)
\end{equation}
From equations~(\ref{eqn:others}) and (\ref{eqn:lower-bound-others}), we get 
\begin{eqnarray*}
    rT-R+T-u &\geq &r(u-1)\\
    (r+1)(T-u)+r& \geq & R
\end{eqnarray*}
Since $T-u\leq 3r$, we get $R\leq 3r^2+4r<\lambda$. However, $R$ must be a multiple of $\lambda$ since the value of each item in $G$ is a multiple of $\lambda$. Therefore, $R=0$, and thus agent $s$ must transfer all the items in $G$ to the others. Due to the transfer budget limit $k = 3r$, no item from $H$ can, therefore, be transferred. Thus $X'_s = H$.
\end{proof}
For $X'$ to be $\mathsf{EF1}$, $v(X'_i) \geq T-1$ for all $i\in [r]$. However, $v(X'_i)$ is a multiple of $\lambda$. Since $T = \lambda B$, we have $v(X'_i)\geq T$. Since $v(G) = rT$, for each agent $i \in [r]$, $v(X'_i) = T$ must hold. The corresponding sets of items in the partition instance then form a solution to $3$-\textsc{Partition}: undoing the scaling by $\lambda$, the items allocated to each agent $i \in [r]$ correspond to a subset of $\{a_1,\ldots,a_{3r}\}$ summing to $B$. Finally, since each $a_i$ lies strictly between $B/4$ and $B/2$, each such subset must contain exactly three elements. Hence, $(a_1,\ldots,a_{3r};B)$ is a \textsc{Yes}-instance of the $3$-\textsc{Partition} instance $\mathcal{S}$. This completes our proof.
\end{proof}

\begin{remark}
As an immediate consequence of \hyperref[thm:k-valid-transfers-EF1-NP-hard-identical-additive-restated]{\Cref*{thm:k-valid-transfers-EF1-NP-hard-identical-additive}}, the optimization problem of computing the minimum number of valid transfers required to reach an $\mathsf{EF1}$ allocation is strongly $\mathrm{NP}$-complete, even when  agents have identical additive valuations.
\end{remark}

\begin{remark} \label{rem:opt-ef1-chores-np-hard}
    The analog of $\efopt$ for the case in which all items are \emph{chores} and agents have identical additive valuations is also strongly $\mathrm{NP}$-complete by following the same techniques used to prove \hyperref[thm:k-valid-transfers-EF1-NP-hard-identical-additive-restated]{\Cref*{thm:k-valid-transfers-EF1-NP-hard-identical-additive}}. In the above reduction, negating the values in $\mathcal{I(S)}$ gives a reduction for $\mathrm{NP}$-hardness of $\efopt$ on chores. This is because, in the goods reduction, if an agent $i$ $\mathsf{EF1}$-envies agent $j$, i.e., $v(X_i)<v(X_j \sm \gb{X}{j})$, then negating the values makes the item $\gb{X}{j}$ the same as $\cw{X}{j}$ and the values of the bundles of $i$ and $j$ satisfy $v(X_i)>v(X_j\sm \cw{X}{j})$. Thus $j$ $\mathsf{EF1}$-envies $i$ in the chores instance. The same sequence of transfers then restores $\mathsf{EF1}$.
\end{remark}

\subsection{\texorpdfstring{$\efconn$}{EF1-Restoration} on Mixed Manna} \label{subsec:ef1-mixed-case}

In light of the positive results for goods and chores, it is a natural question to extend them to the setting of mixed manna. In this section, we discuss $\efconn$ for the \emph{mixed manna} setting (where the set of items to be allocated is a mix of goods and chores), and show that it may not always be possible, even in the case of identical additive valuations.

Formally, we consider the $\efconn$ problem with an identical additive valuation $v \colon 2^{\G} \to \mathbb{R}$, where $v$ may take on positive as well as negative values, i.e., the set of items contains both \emph{goods and chores}. We refer to this problem as the $\efconn$ problem for \emph{mixed manna}. Note that, $\mathsf{EF1}$ allocations are guaranteed to always exist for the mixed manna setting as well \cite{AzizCIW19}.  We begin by formally defining what would an $\ef1$ allocation mean for mixed manna. 

\begin{definition}[$\mathsf{EF1}$-envy in the case of mixed manna]\label{def:ef1-envy-mixed}
We say an agent $i$ \emph{$\mathsf{EF1}$-envies}  agent $j$ in an allocation $X$ of goods and chores if, 
\begin{enumerate}[label=(\roman*), ref=\roman*]
    \item for all chores $c \in X_i$, we have $v_i(X_i \sm c) < v_i(X_j)$, and
    \item for all goods $g \in X_j$, we have $v_i(X_i) < v_i(X_j \sm g)$.
\end{enumerate}

That is, $i$ continues to envy $j$ even after virtually \emph{removing any one chore} from her own bundle, or after \emph{removing any one good} from $j$'s bundle. 
We say $i$ is \emph{$\mathsf{EF1}$-happy} in $X$ if she does not $\mathsf{EF1}$-envy any other agent.
\end{definition}

With this modified definition, we may define an $\mathsf{EF1}$ allocation and a near-$\mathsf{EF1}$ allocation exactly as we did in \cref{def:ef1} and \cref{def:near-ef1}.

\begin{definition}[Envy-freeness up to one item ($\mathsf{EF1}$) for mixed manna]\label{def:ef1-mixed}
An allocation $X$ of a set of goods and chores is said to be $\mathsf{EF1}$ if every agent is $\mathsf{EF1}$-happy in $X$.
\end{definition}

\begin{definition}[Near-$\mathsf{EF1}$ allocations of mixed manna]\label{def:near-ef1-mixed}
An allocation $X$ of a set of goods and chores is said to be near-$\mathsf{EF1}$ if exactly one \emph{fixed} agent is not $\mathsf{EF1}$-happy in $X$. As usual, we let agent $1$ denote the unhappy agent, without loss of generality.
\end{definition}

\begin{example} \label{example:identical-ef1-mixed-counterexample}

We present a simple $\efconn$ instance (\cref{tab:identical-ef1-mixed-counterexample}) with $n = 4$ identical agents with additive valuations over $m = 17$ items, and an input near-$\mathsf{EF1}$ allocation (\cref{def:near-ef1}), from where  no valid transfer or valid exchange of items exist.

\begin{table}[htbp]
\centering
\setlength{\arraycolsep}{5pt}
\[
\begin{array}{r|ccccccccccccccccc}
  & g_1 & g_2 & g_3 & g_4 & g_5 & g_6 & g_7 & g_8 & g_9 & g_{10} & g_{11} & g_{12} & g_{13} & g_{14} & g_{15} & g_{16} & g_{17} \\
\hline
\text{$v(g)$} & \textcolor{violet}{3} & \textcolor{violet}{-10} & \textcolor{teal}{10} & \textcolor{teal}{-1} & \textcolor{teal}{-1} & \textcolor{teal}{-1} & \textcolor{teal}{-1} & \textcolor{teal}{-1} & \textcolor{teal}{-1} & \textcolor{magenta}{6} & \textcolor{magenta}{-1} & \textcolor{magenta}{-1} & \textcolor{magenta}{-1} & \textcolor{magenta}{-1} & \textcolor{magenta}{-1} & \textcolor{cyan}{-2} & \textcolor{cyan}{-1}
\end{array}
\]

\vspace{0.25ex}

\[
\setlength{\arraycolsep}{6pt}
\begin{array}{@{}r|l@{}}
\text{Agent} & \text{Items in } X_i \\
\hline
\textcolor{violet}{1} & \{g_1,\, g_2\} \\
\textcolor{teal}{2} & \{g_3,\, g_4,\, g_5,\, g_6,\, g_7,\, g_8,\, g_9\} \\
\textcolor{magenta}{3} & \{g_{10},\, g_{11},\, g_{12},\, g_{13},\, g_{14},\, g_{15}\} \\
\textcolor{cyan}{4} & \{g_{16},\, g_{17}\}
\end{array}
\]

\caption{A near-$\mathsf{EF1}$ allocation $X$ for which no valid transfer or exchange exists}
\label{tab:identical-ef1-mixed-counterexample}
\end{table}

Observe that the only $\mathsf{EF1}$-envy in $X$ is from agent $1$ towards agent $2$, so $X$ is near-$\mathsf{EF1}$. We leave it to the readers to verify that every transfer or exchange of items leads to new $\mathsf{EF1}$-envy being created (i.e., removal of a chore from own bundle or removal of a good from the envied bundle is not sufficient to resolve envy), rendering all such operations invalid. Hence, $\efconn$ may not be possible in the mixed manna setting, even for identical additive valuations. \hfill $\qed$
\end{example}



\section{\texorpdfstring{$\efconn$}{EF1-Restoration} for Additive Binary Valuations} \label{sec:additive-binary}

In this section, we discuss $\efconn$ for additive binary valuations, wherein each item is valued either $0$ or $1$ by every agent, and the value of any bundle is the sum of values of its constituent items. In contrast to the case of identical monotone valuations (\cref{algo:identical-general}), an $\mathsf{EF1}$ allocation need not be reachable from an arbitrary input allocation via valid transfers for additive binary valuations. Indeed, in \cref{subsec:additive-binary-NP-hard}, we prove $\mathrm{NP}$-hardness of the budgeted variant $\efopt$, and also prove that the same reduction establishes $\mathrm{NP}$-hardness of the $\efconn$ problem for additive binary valuations (\Cref{thm:EF1-Restoration-NP-hard-additive-binary}).

Next, in \cref{subsec:graphical}, we identify an interesting subclass\footnote{Graphical valuations in fair division context is introduced by Christodoulou et al. \cite{christodoulou2023fair}} of additive binary valuations for which $\efconn$ is tractable -- namely, the class of \emph{multi-graphical valuations}, in which every item is valued (at $1$) by at most two agents. For this setting, we design an efficient algorithm (\Cref{algo:graphical-binary}) that finds a positive solution to $\efconn$ using an \emph{optimal} number of valid transfers (see \cref{thm:binary-arb}).

\subsection{\texorpdfstring{$\mathrm{NP}$}{NP}-hardness of \texorpdfstring{$\efconn$}{EF1-Restoration} and \texorpdfstring{$\efopt$}{Optimal-EF1-Restoration}} \label{subsec:additive-binary-NP-hard}

In this section, we prove \hyperref[thm:EF1-Restoration-NP-hard-additive-binary-restated]{\Cref*{thm:EF1-Restoration-NP-hard-additive-binary}}, which we restate here for convenience.

\theorembinary*
\label{thm:EF1-Restoration-NP-hard-additive-binary-restated}

\begin{proof}

To prove $\mathrm{NP}$-hardness of $\efopt$, we present a reduction from the \textsc{Independent Set} problem, defined below.

\begin{definition}[\textsc{Independent Set}]
Given an undirected graph $G = (V,E)$ and a positive integer $k \leq \abs{V}$, the goal is to decide whether $G$ contains an independent set of size at least $k$, that is, a set $S \subseteq V$ with $\abs{S} \geq k$ such that no two vertices in $S$ are adjacent.
\end{definition}

Let $\mathcal{S} \coloneq (G=(V,E),\;k)$ be an instance of \textsc{Independent Set}. We construct an instance $\mathcal{I(S)}$ of $\efopt$ as follows.\\

\noindent{\bf The reduction to $\efopt$:}
\begin{itemize}
\item The set of agents $\mathcal{N}=\{p_v \mid v \in V\} \cup \{c,s\} \cup \{a_e \mid e \in E\} \cup\{w_j,r_j \mid j \in [k+1]\}$. 
\item The set of items $\mathcal{M}=\{g_v,\ell_v \mid v \in V\} \cup \{\ell_c\} \cup \{d_1,\ldots,d_{k+1}\} \cup \{\ell_e \mid e \in E\} \cup \{\ell_j,\ell'_j \mid j\in [k+1]\}$.
\item Valid transfer budget $\coloneq k$.
\end{itemize}
The valuation and initial allocation $X$ are described informally below, and summarized in \cref{tab:binary-valuation-matrix}.

\begin{itemize}
\item For each vertex $v \in V$, we have a {\em vertex item} $g_v$, a {\em label item} $\ell_v$, and a {\em gadget agent} $p_v$ with $X_{p_v} \coloneq \{g_v,\ell_v\}$.

\item We have a {\em color agent} $c$ and a {\em label item} $\ell_c$, where $X_c \coloneq \{\ell_c\}$,

\item An agent $s$ and {\em dummy items} $d_1,\ldots,d_{k+1}$, where $X_s\coloneq\{d_1,\ldots,d_{k+1}\}$,

\item For each edge $e \in E$, a label item $\ell_{a_e}$ and an edge agent $a_e$ with 
$X_{a_e} \coloneq \{\ell_{a_e}\}$.

\item For each $j \in [k+1]$, two {\em watcher agents} $w_j$ and $r_j$, with label items $\ell_j$ and $\ell'_j$. Their initial bundles are $X_{w_j} \coloneq \{\ell_j\}$ and $X_{r_j} \coloneq \{\ell'_j\}$.
\end{itemize}

All valuations are additive and binary, defined as follows. When we say an agent $a$ \enquote{values} an item $g$, we mean $v_a(g)=1$.

\begin{itemize}
    \item each $p_v$ values no item;
    \item $c$ values every vertex item $g_v \in V$ and every dummy item $d_1,\ldots,d_{k+1}$;
    \item $s$ values only the dummy items $d_1,\ldots,d_{k+1}$;
    \item for each edge $e = (u,v) \in E$, the agent $a_e$ values only $g_u$ and $g_v$;
    \item for each $j \in [k+1]$, $w_j$ and $r_j$ value the dummy item $d_j$, and every label item except their own.
\end{itemize}

\cref{tab:binary-valuation-matrix} provides a simple visualization of the constructed instance $\mathcal{I(S)}$ of $\efconn$ and $\efopt$.

\begin{table}[h]
\centering
\footnotesize
\setlength{\tabcolsep}{3pt}
\renewcommand{\arraystretch}{1.15}
\resizebox{\linewidth}{!}{%
\begin{tabular}{|c|c|c|c|c|c|c|}
\hline
\multicolumn{1}{|>{\columncolor{gray!20}\centering\arraybackslash}m{2.5cm}|}
{\diagbox[width=7em]{\textsc{Agents}}{\textsc{Items}}}
& \multicolumn{1}{>{\columncolor{blue!12}\centering\arraybackslash}m{3.0cm}|}
{\makecell{$\{g_v \mid v \in V\}$}}
& \multicolumn{1}{>{\columncolor{blue!12}\centering\arraybackslash}m{2.6cm}|}
{\makecell{$\{d_j \mid j \in [k+1]\}$}}
& \multicolumn{1}{>{\columncolor{blue!12}\centering\arraybackslash}m{2.8cm}|}
{\makecell{$\{\ell_v \mid v \in V\}$}}
& \multicolumn{1}{>{\columncolor{blue!12}\centering\arraybackslash}m{1.3cm}|}
{$\ell_c$}
& \multicolumn{1}{>{\columncolor{blue!12}\centering\arraybackslash}m{2.8cm}|}
{\makecell{$\{\ell_e \mid e \in E\}$}}
& \multicolumn{1}{>{\columncolor{blue!12}\centering\arraybackslash}m{3.4cm}|}
{\makecell{$\{\ell_j,\ell'_j \mid j \in [k+1]\}$}}
\\
\hline
\multicolumn{1}{|>{\columncolor{orange!18}\centering\arraybackslash}m{2.5cm}|}
{\makecell{\textsc{Owner }\\\textsc{in $X$}}}
& \makecell{the corresponding\\$p_v$}
& $s$
& \makecell{the corresponding\\$p_v$}
& $c$
& \makecell{the corresponding\\$a_e$}
& \makecell{the corresponding\\$w_j,r_j$}
\\
\hline
\multicolumn{1}{|>{\columncolor{green!12}\centering\arraybackslash}m{2.5cm}|}
{\makecell{$p_v$\\$(v \in V)$}}
& $\mathbf{0}$
& $\mathbf{0}$
& $\mathbf{0}$
& $0$
& $\mathbf{0}$
& $\mathbf{0}$
\\
\hline
\multicolumn{1}{|>{\columncolor{green!12}\centering\arraybackslash}m{2.5cm}|}{$c$}
& $\mathbf{1}$
& $\mathbf{1}$
& $\mathbf{0}$
& $0$
& $\mathbf{0}$
& $\mathbf{0}$
\\
\hline
\multicolumn{1}{|>{\columncolor{green!12}\centering\arraybackslash}m{2.5cm}|}{$s$}
& $\mathbf{0}$
& $\mathbf{1}$
& $\mathbf{0}$
& $0$
& $\mathbf{0}$
& $\mathbf{0}$
\\
\hline
\multicolumn{1}{|>{\columncolor{green!12}\centering\arraybackslash}m{2.5cm}|}
{\makecell{$a_e$\\$(e=(u,v)\in E)$}}
& $\mathbf{e}_u+\mathbf{e}_v$
& $\mathbf{0}$
& $\mathbf{0}$
& $0$
& $\mathbf{0}$
& $\mathbf{0}$
\\
\hline
\multicolumn{1}{|>{\columncolor{green!12}\centering\arraybackslash}m{2.5cm}|}
{\makecell{$w_j$\\$(j \in [k+1])$}}
& $\mathbf{0}$
& $\mathbf{e}_j$
& $\mathbf{1}$
& $1$
& $\mathbf{1}$
& \makecell{$\mathbf{1}$, except at $\ell_j$}
\\
\hline
\multicolumn{1}{|>{\columncolor{green!12}\centering\arraybackslash}m{2.5cm}|}
{\makecell{$r_j$\\$(j \in [k+1])$}}
& $\mathbf{0}$
& $\mathbf{e}_j$
& $\mathbf{1}$
& $1$
& $\mathbf{1}$
& \makecell{$\mathbf{1}$, except at $\ell'_j$}
\\
\hline
\end{tabular}%
}
\caption{Block representation of the additive binary valuations in the reduction from \textsc{Independent Set}, together with the initial ownership of each item family. Rows are indexed by agents and columns are indexed by sets of items. The first row denotes the agent whose bundle contains the item(s) in $X$ corresponding to the respective column. The bold numbers are vectors of dimension same as the number of items corresponding to the respective column. In particular, $\mathbf{0}$ is the vector of all $0$ values, $\mathbf{e}_u$ denotes a vector with entry corresponding to $u$ being $1$ and the rest of the entries $0$. }
\label{tab:binary-valuation-matrix}
\end{table}

The construction of the instance $\mathcal{I(S)}$ described in \cref{tab:binary-valuation-matrix} can clearly be carried out in polynomial time. 
We denote the initial allocation by $X$ and it is specified in the second row of \cref{tab:binary-valuation-matrix}. Note that $X$ is near-$\mathsf{EF1}$ (\cref{def:near-ef1}), as only $c$ $\mathsf{EF1}$-envies $s$. It is easy to check that this is the only $\mathsf{EF1}$-envy in $X$.

We next show that no valid transfer can move a dummy item or a label item.

\begin{claim}\label{clm:adbin-dummy}
Any sequence of valid transfers starting from $X$ cannot involve the transfer of any dummy items $d_1,\ldots,d_{k+1}$.
\end{claim}
\begin{proof}
Fix any $j \in [k+1]$. Suppose that $d_j$ is the first among the dummy/label items that is transferred in some valid restoration sequence. 
If $d_j$ is transferred to an agent $x \not\in \{w_j,r_j\}$, then $w_j$ values both $d_j$ and the label item in the bundle of $x$, while it values its own bundle at $0$; hence, $w_j$ {\em develops} $\mathsf{EF1}$-envy towards $x$, and hence, this cannot be a valid transfer. If $d_j$ is moved to $w_j$, then $r_j$ develops an $\mathsf{EF1}$-envy towards $w_j$; similarly, if $d_j$ is moved to $r_j$, then $w_j$ develops an $\mathsf{EF1}$-envy towards $r_j$. Hence, no dummy item can be transferred from $X$, provided the label items are frozen.
\end{proof}

Now, we show a similar claim for the label items.
\begin{claim}\label{clm:adbin-label}
Any sequence of valid transfers starting from $X$ cannot involve a transfer of any label item.
\end{claim}
\begin{proof}
Let there be a label item $\ell$ such that the transfer of $\ell$ is a valid transfer with respect to $X$, and is the first among the dummy/label items that is transferred in some valid restoration sequence.
If $\ell$ is transferred to $s$, then there exists some watcher agent $w_j$ or $r_j$ for some $j \in [k+1]$ who values $\ell$,
and hence, values agent $n$'s new bundle at $2$. Since this watcher agent values her own bundle only at $0$, such a transfer is not valid with respect to $X$.

If $\ell$ is transferred to an agent other than $s$, then the destination bundle contains two label items. Since $k \geq 1$, there are at least four watcher agents in total, and at most two of them fail to value one of these two labels. Hence, some watcher agent values both labels while valuing its own bundle at $0$. This will create new $\mathsf{EF1}$-envy, and hence, such a transfer is not valid with respect to $X$. Overall, no label item constitutes a valid transfer, which proves the stated claim.
\end{proof}

From Claims~\ref{clm:adbin-dummy} and \ref{clm:adbin-label}, we conclude that no dummy item or label item can be transferred in any valid transfer with respect to $X$. We are now equipped to analyze our reduction.\\

\noindent{\bf \textsc{Independent Set} $\implies$ $\efopt$:}

First, suppose $G$ has an independent set $S \subseteq V$ with $\abs{S} \geq k$. Recall that, agent $c$ $\mathsf{EF1}$-envies agent $s$ while all other agents are $\mathsf{EF1}$-happy in the initial allocation $X$. So, we transfer the vertex items that correspond to any $k$ vertices of $S$, one by one, from their respective agents $p_v$ to the agent $c$. Each such transfer is valid: the envy of $c$ towards $s$ weakly decreases, watcher agents do not value the vertex items and no edge agent becomes $\mathsf{EF1}$-unhappy since $S$ is an independent set. In the resulting allocation, agent $c$ holds $k$ vertex items (that it values), while $s$ now holds all $k+1$ dummy items. Hence, $c$ is $\mathsf{EF1}$-happy towards $s$. Moreover, no edge agent $\mathsf{EF1}$-envies $c$, since $c$ contains no pair of adjacent vertices. All other agents are trivially $\mathsf{EF1}$-happy. Therefore, after the above $k$ valid transfers, the resulting allocation is $\mathsf{EF1}$ and so, $\mathcal{I(S)}$ is a \textsc{Yes}-instance of $\efopt$.\\

\noindent{\bf $\efopt$ $\implies$ \textsc{Independent Set}:}

Conversely, suppose that $\mathcal{I(S)}$ is a \textsc{Yes}-instance of $\efopt$. Then, there exists a valid transfer sequence that starts from $X$ and leads to an $\mathsf{EF1}$ allocation $Y$, using at most $k$ valid transfers. It follows from Claims~\ref{clm:adbin-dummy} and \ref{clm:adbin-label} that the only valid transfers starting from $X$ are the transfers of vertex items from $p_v$ to $c$, such that $c$ does not receive two vertex items that correspond to the endpoints of an edge in $G$. Also, transfer of vertex items does not make the transfer of any dummy item or a label item valid, as the same argument in Claims~\ref{clm:adbin-dummy} and \ref{clm:adbin-label} applies. Therefore, $Y_s$ must contain all the dummy items. Hence, $Y_c$ must have a value at least $k$, and hence must contain at least $k$ vertex items.

Finally, $Y_c$ must correspond to an independent set in $G$. This is because, for any $e = (u,v) \in E$, if both $g_u$ and $g_v$ belong to $Y_c$, then the agent $a_e$ would $\mathsf{EF1}$-envy $c$, a contradiction. Therefore, the set of vertices that correspond to vertex items held by $c$ in $Y$ is independent. Since it has size at least $k$, the graph $G$ contains an independent set of size at least $k$. That is, $(G,k)$ is a \textsc{Yes}-instance of the \textsc{Independent Set} instance $\mathcal{S}$. One can also check that $G$ has an independent set of size at least $k$ if and only if there is a sequence of valid transfers starting from $X$ that restores $\mathsf{EF1}$ by the same arguments. Therefore, the problem of deciding if $\efconn$ is possible, and the $\efopt$ problem, are $\mathrm{NP}$-hard, even when all the agents have additive binary valuations.
\end{proof}

\begin{remark}
As an immediate consequence of part (ii) of \hyperref[thm:EF1-Restoration-NP-hard-additive-binary-restated]{\Cref*{thm:EF1-Restoration-NP-hard-additive-binary}}, the optimization problem of computing the minimum number of valid transfers required to reach an $\mathsf{EF1}$ allocation is $\mathrm{NP}$-hard even when all the agents have additive binary valuations, and the initial allocation is near-$\mathsf{EF1}$.
\end{remark}


\subsection{\texorpdfstring{$\efconn$}{EF1-Restoration} for Orientations under Graphical Valuations} \label{subsec:graphical}

In this section, we consider the valuation class defined using multigraphs with additive valuations where an item is valued by at most two agents and hence can be modeled as an edge between them. We study a special class of non-wasteful allocations, called \emph{orientations}. Note that $\mathsf{EF1}$ orientations always exist for monotone valuations \cite{deligkas2024ef1efxorientations}. In our $\efconn$ problem, starting from an arbitrary orientation, the goal is to reach an $\mathsf{EF1}$ orientation by only making valid transfers, and traversing only via orientations.

We consider \emph{binary} valuations, where an item $g$ that corresponds to an edge $g = (i,j) \in E(i,j)$ between two agents $i$ and $j$ will be valued as $v_i(g) = v_j(g) \ceq 1$, while $v_k(g) \ceq 0$ for all agents $k \neq i,j$. We prove that $\efconn$ is possible in this setting (\cref{thm:binary} and its generalization, \cref{thm:binary-arb}). We do so by developing \Cref{algo:graphical-binary}, that achieves an optimal number of transfers (in the worst case) to reach an $\mathsf{EF1}$ allocation.

We begin make some useful observations regarding orientations and the envy graph (see \cref{def:envy-graph}) corresponding to them, for these valuation classes.

\begin{lemma}\label{lem:basic_orientation}
Consider a fair division instance with additive binary valuations, and let $X$ be an orientation. Then, we have
\begin{enumerate}
    \item \label{itm:own-bundle} For any $i,j \in [n]$, $v_i(X_i) = |X_i| \geq v_j(X_i)$. That is, if the allocation is an orientation, every agent values their bundle at least as much as any other agent values it.
    \item \label{itm:d-length} If there exists a (directed) path  $(i_0,i_1, i_2, \ldots, i_d)$ of length $d$ from agent $i_0$ to agent $i_d$ in the envy graph $\genvy(X)$, then $v_{i_d}(X_{i_d}) \geq v_{i_0}(X_{i_0}) + d$. 
    \item \label{itm:envy-graph-acyclic} The envy graph $\genvy(X)$ is acyclic.
\end{enumerate}
\end{lemma} 
\begin{proof}

(1) Since $X$ is an orientation, and the valuation is binary, we have $v_i(X_i)=|X_i|$. Note that, for any other agent $j \neq i$, we have $v_j(g)\in \{0,1\}$ for any $g \in X_i$. Hence, we have $v_j(X_i)\leq |X_i|=v_i(X_i)$.

(2) Assume agent $i$ envies agent $j$. Then, using \Cref{itm:own-bundle}, we have $|X_i|=v_i(X_i)<v_i(X_j)\leq v_j(X_j)=|X_j|$. Therefore, we obtain $v_j(X_j)\geq v_i(X_i)+1$. Generalizing it to a $d$-length envy path in $\genvy(X)$, say  $(i_0, i_1, \ldots, i_d)$, we obtain the following: 
\[
v_{i_d}(X_{i_d}) \geq v_{i_{d-1}}(X_{i_{d-1}})+1
\geq v_{i_{d-2}}(X_{i_{d-2}})+2 \geq
\ldots
\geq v_{i_0}(X_{i_0})+d.
\]
(3) Let us now assume a cycle $(i_0,\ldots,i_d)$ in $\genvy(X)$, where agent $i_d$ envies $i_0$. Using \Cref{itm:d-length}, we have $v_{i_d}(X_{i_d}) \geq v_{i_0}(X_{i_0})+d$. And then using part (1), we have $v_{i_0}(X_{i_0})\geq v_{i_d}(X_{i_0})$. Combining the last two inequalities, we obtain  $v_{i_d}(X_{i_d})\geq v_{i_d}(X_{i_0})$. This means $i_d$ does not envy $i_0$, leading to a contradiction. Hence, $\genvy(X)$ is acyclic.
\end{proof}

\cref{lem:basic_orientation} can be easily extended to the case of pairwise-homogeneous graphical valuations.

\begin{corollary} \label{cor:homo-acyclic}
    For any fair division instance with pairwise-homogeneous graphical valuations, the envy graph $\genvy(X)$ is acyclic for any orientation $X$.
\end{corollary}


\subsubsection{Binary Graphical Valuations with Near-\texorpdfstring{$\mathsf{EF1}$}{EF1} Orientations} \label{subsubsec:graphical-binary-near-EF1}

We begin by describing an $\efconn$ algorithm (\Cref{algo:graphical-binary-basic}) when the input orientation is \emph{near-$\mathsf{EF1}$}, i.e., only one agent (agent $1$, without loss of generality) is not $\mathsf{EF1}$-happy. The case of an arbitrary orientation is analogous, and is discussed in \cref{subsubsec:graphical-binary-arb}.

\begin{algorithm}[htbp]
\caption{\texorpdfstring{$\efconn$}{EF1-Restoration} of a near-\texorpdfstring{$\mathsf{EF1}$}{EF1} orientation on a multigraph} \label{algo:graphical-binary-basic}
\SetAlgoLined
\SetKwInOut{Input}{Input}
\SetKwInOut{Output}{Output}
\SetKwFunction{FMain}{EF1\_Orientation\_Restorer\_Basic}
\SetKwProg{Fn}{}{:}{}
\SetKw{KwTo}{to}
\SetKw{KwRet}{return}
\DontPrintSemicolon

\Input{A fair division instance $\mathcal{I}=\left<[n], \G, \{v_i\}_{i \in [n]} \right>$ on a multigraph with additive binary valuations, and a near-$\mathsf{EF1}$ orientation $X = (X_1, \ldots, X_n)$.}
\Output{ An $\mathsf{EF1}$ orientation.}
\BlankLine

\Fn{\FMain{$X$}}{
    $\mathcal{E} \leftarrow \text{agents $\mathsf{EF1}$-envied by agent } 1 \text{ in } X$\;
    \While{$S \neq \emptyset$}{
        $s \leftarrow \text{closest sink to $1$ in $\genvy(X)$}$\;
        $\mathcal{P} \leftarrow \text{a shortest path from $1$ to $s$ in $\genvy(X)$}$\;
        $p \leftarrow \text{the predecessor of $s$ on $\mathcal{P}$}$\;
        $g \leftarrow \text{an item in $X_s$ with $v_{p}(g) = 1$}$\;
\BlankLine
\makebox[0.8\linewidth][l]{$\left.\begin{array}{l}
    X_s \leftarrow X_s \sm g \\
    X_p \leftarrow X_p \cup g
\end{array}
\right\} \: \text{Transfer of } g \text{ from } s \text{ to } p$}
\BlankLine
    Update $\mathcal{E}$\;
}
\KwRet{$X = (X_1, \ldots, X_n)$}\;
}
\end{algorithm}

\begin{theorem} \label{thm:binary}
For the $\efconn$ problem on a multigraph with additive binary valuations, \Cref{algo:graphical-binary-basic} terminates with an $\mathsf{EF1}$-orientation, and it makes at most $q(n - 1)$ valid transfers of items, where $q$ is the number of items that agent $1$ requires to become $\mathsf{EF1}$-happy in the input near-$\mathsf{EF1}$ orientation $X$.
\end{theorem}

\begin{proof}
Let $X$ be the input near-$\mathsf{EF1}$ orientation. We require $q$ valid transfers (see \cref{def:valid-transfer}) to be made to agent $1$ to restore $\mathsf{EF1}$. We claim that \Cref{algo:graphical-binary-basic} makes at most $(n-1)$ valid transfers in the system, per item that agent $1$ receives.

From \cref{lem:basic_orientation}, we know that $\genvy(X)$ is acyclic. Hence, a sink in $\genvy(X)$ can transfer an item without creating any new $\mathsf{EF1}$-envy. However, agent $1$ may not positively value any item in the bundle of any sink. Therefore, in \Cref{algo:graphical-binary-basic}, we transfer an item from a sink to one of its predecessors in $\genvy(X)$. Since any predecessor must positively value at least one item in the sink's bundle, the transfer of such an item maintains an orientation. We claim that the allocation that results from such a transfer is a near-$\mathsf{EF1}$ orientation. Furthermore, we show that agent $1$ will receive an item (that she values) after at most $n-1$ such (valid) transfers. Hence, after $q(n-1)$ transfers, we must reach an $\mathsf{EF1}$ orientation.

Among all sinks reachable from $1$ in $\genvy(X)$, let $s$ be a sink that is closest to $1$. We note that $s \neq 1$ as, otherwise, $X$ is already an $\mathsf{EF1}$-orientation. Let $\mathcal{P}$ be a shortest path from $1$ to $s$, say of length $d$. Let $p$ be the predecessor of $s$ on $\mathcal{P}$. Since $p$ envies $s$, there exists an item $g \in X_s$ such that $v_p(g) = 1$. Moreover, since $X$ is an orientation, we know $v_s(g) = 1$ as well. Hence, after this transfer, the resulting allocation ($Y$, say) remains an orientation.

We have $Y_s = X_s \sm g, Y_p = X_p \cup g$, and $Y_j = X_j$ for all $j \neq p,s$. Since $g$ is valued only by $p$ and $s$, no agent other than $s$ can have any $\mathsf{EF1}$-envy towards $p$ after this transfer. Using \cref{lem:basic_orientation}, we have $v_s(X_s) \geq v_p(X_p)+1$. Hence,
\[
   v_s(Y_s) = v_s(X_s) - 1 \geq v_p(X_p) = v_p(Y_p \sm g)
\]
That is, $s$ has no $\mathsf{EF1}$-envy towards $p$ in $Y$. Therefore, the transfer of $g$ from $s$ to $p$ is valid. 

Note that if $p \neq 1$, $p$ has no $\mathsf{EF1}$-envy towards anyone, including $s$, in $X$. For any agent $t$ who $p$ envies in $X$, we have $v_p(X_p)\geq v_p(X_t)-1$, and $v_p(Y_p)=v_p(X_p)+1\geq v_p(X_t)=v_p(Y_t)$. That is, $p$ does not envy anyone in $Y$. Therefore, $p$ becomes a sink in $\genvy(Y)$. 
 
Observe that, in $Y$, $p$ is a sink which has a path of length $d - 1$ from $1$ in $\genvy(Y)$. Thus, after each valid transfer from a closest sink to its predecessor, the distance between agent $1$ to its closest sink in the envy graph decreases by $1$. Since the closest sink in $\genvy(X)$ can be at distance at most $n-1$, it follows that at most $n - 1$ transfers from a closest sink to its predecessor suffice for agent $1$ to receive one item. Since the recipient always values the transferred item positively, the orientation property is maintained. Overall, $q$ items are transferred by \Cref{algo:graphical-binary-basic} to agent $1$, by making at most $q(n-1)$ valid transfers, and this results in an $\mathsf{EF1}$ orientation. This completes the proof.
\end{proof}


\subsubsection{Binary Graphical Valuations with Arbitrary Orientations} \label{subsubsec:graphical-binary-arb}

In this section, we modify \Cref{algo:graphical-binary-basic}, and design an $\efconn$ algorithm (\Cref{algo:graphical-binary}) for an arbitrary input orientation, and prove the following theorem that generalizes \cref{thm:binary}.

\thmgraphicalbinary*
\label{thm:graphical-binary-restated}

\begin{algorithm}[htbp]
\caption{\texorpdfstring{$\efconn$}{EF1-Restoration} on multigraphs} \label{algo:graphical-binary}
\SetAlgoLined
\SetKwInOut{Input}{Input}
\SetKwInOut{Output}{Output}
\SetKwFunction{FMain}{EF1\_Orientation\_Restorer}
\SetKwProg{Fn}{}{:}{}
\SetKw{KwTo}{to}
\SetKw{KwRet}{return}
\DontPrintSemicolon

\Input{A fair division instance $\mathcal{I}=\left<[n], \G, \{v_i\}_{i \in [n]} \right>$ on a multigraph with additive binary valuations, and an orientation $X = (X_1, \ldots, X_n)$.}
\Output{ An $\mathsf{EF1}$ orientation.}
\BlankLine

\Fn{\FMain{$X$}}{
    $\mathcal{U} \leftarrow \text{agents who are $\mathsf{EF1}$-unhappy in } X$\;
    \While{$\mathcal{E} \neq \emptyset$}{
        $s \leftarrow \text{a sink in $\genvy(X)$ who is closest to some agent in $\mathcal{U}$}$\;
        $i \leftarrow \text{an agent in $\mathcal{U}$ who is closest to $s$, in $\genvy(X)$}$ \label{step:choice-of-i-graphical-general}
        $\mathcal{P} \leftarrow \text{a shortest path from $i$ to $s$ in $\genvy(X)$}$\;
        $p \leftarrow \text{the predecessor of $s$ on $\mathcal{P}$}$\;
        $g \leftarrow \text{an item in $X_s$ with $v_{p}(g) = 1$}$\;
\BlankLine
\makebox[0.8\linewidth][l]{$\left.\begin{array}{l}
    X_s \leftarrow X_s \sm g \\
    X_p \leftarrow X_p \cup g
\end{array}
\right\} \: \text{Transfer of } g \text{ from } s \text{ to } p$}
\BlankLine
    Update $X$ and $\mathcal{U}$\;
}
\KwRet{$X = (X_1, \ldots, X_n)$}\;
}
\end{algorithm}

\begin{proof}
Let $X$ be the input orientation, and let $\mathcal{U}$ be the set of $k$ agents who are $\mathsf{EF1}$-unhappy in $X$.

From \cref{lem:basic_orientation}, we know that $\genvy(X)$ is acyclic and a sink in $\genvy(X)$ can transfer an item without creating any new $\mathsf{EF1}$-envy. However, none of the agents in $\mathcal{U}$ may positively value any item in the bundle of any sink. Therefore, in \Cref{algo:graphical-binary}, we transfer an item $g$ from a sink $s$ to one of its predecessors $p$ in $\genvy(X)$, with $v_p(g) = v_s(g) = 1$. By imitating the proof of \cref{thm:binary}, it is easy to see that such a transfer is valid (i.e., creates no new $\mathsf{EF1}$-envy), and maintains an orientation.

Note that, if  $i \in \mathcal{U}$ is the closest agent to some sink $s$ in $\genvy(X)$, then (by our choice of $i$ in \linkline{step:choice-of-i-graphical-general}), every agent $j \neq i$ in a shortest path $\mathcal{P}$ from $i$ to $s$ must be $\mathsf{EF1}$-happy. Furthermore, $p$ must be a sink in the envy graph $\genvy(Y)$, where $Y$ is the allocation that results from the transfer of $g$ from $s$ to $p$ in $X$. This again follows by imitating the proof of \cref{thm:binary}.

Observe that, during the first iteration of the algorithm, the distance between $i$ and $s$ (as picked by \Cref{algo:graphical-binary}) is at most $(n-k)$. In general, if there are $\ell \leq k$ $\mathsf{EF1}$-unhappy agents at any point in the execution of \Cref{algo:graphical-binary}, the shortest distance between some agent in $\mathcal{U}$ and some sink in the envy graph is at most $(n - \ell)$. Hence, \Cref{algo:graphical-binary} will make at most $(n - k) + (n - (k - 1)) + \ldots + (n - 1) \leq nk-\frac{k^2}{2}$ transfers to restore $\mathsf{EF1}$, assuming that each of the agents in $\mathcal{U}$ requires only one item to become $\mathsf{EF1}$-happy. If $q$ is the maximum number of items that any agent in $\mathcal{U}$ requires to become $\mathsf{EF1}$-happy, then \Cref{algo:graphical-binary} makes at most $qk(n - \frac{k}{2})$ transfers before it reaches an $\mathsf{EF1}$ orientation.
\end{proof}

Next, we provide an example to show that the bounds on transfers given in \cref{thm:binary} and \cref{thm:binary-arb} are tight up to an additive factor that depends only on $k,q$ and is independent of n. In particular, we prove the following result.

\begin{theorem} \label{thm:graphical-binary-lower-bound}
For a fair division instance on $n$ agents with graphical valuations that are additive and binary, $\efconn$ requires at least $qk(n-k)$ valid transfers to be made in the worst case, where $k$ is the number of $\mathsf{EF1}$-unhappy agents in the input orientation, and $q$ is the maximum number of items that any $\mathsf{EF1}$-unhappy agent in the input orientation requires to become $\mathsf{EF1}$-happy. 
\end{theorem}

\begin{proof}
Consider a fair division instance in which the underlying graph $\gval$ (corresponding to the graphical valuation) is defined as follows -- agents in the set $[k]$ all form a \enquote{star-like} structure and are all connected to agent $k+1$, while the agents $k+1 - \cdots - n$ form a path. Consider the initial allocation $X$ as follows: $X_i \ceq \emptyset$ for $i \in [k]$, $|X_{k+1}| \ceq 2k$, and $|X_j| \ceq |X_{j-1}|+1$ for each $j \in \{k+2, k+3, \ldots, n\}$. Let us now define the valuations as follows. Let $X_{k+1}=\{g_1, g_2, \ldots, g_{2k}\}$. For $i \in [k]$, we define her valuation $v_i$ as
\[ v_i(g_{\ell}) \ceq
\begin{cases} 
      1, & \text{if } \ell \in \{2i-1,2i\}  \\
      0, & \text{otherwise}
   \end{cases}
\]

For any agent $i \in \{k+1, k+2, \ldots, n-1\}$, we define her valuation $v_i$ as 
\[ v_i(g) \ceq
\begin{cases} 
      1, & \text{if } g \in X_i \cup X_{i+1}   \\
      0, & \text{otherwise}
   \end{cases}
\]
Finally, for agent $n$, we define $v_n(g) \ceq 1$ if and only if $g\in X_n$. 
The valuation is graphical since, for each $\ell \in [2k]$, $g_{\ell}$ is valued only by the agents $\ceil{\frac{\ell}{2}}$ and $k+1$, whereas an item in the bundle $X_i$, $i \geq k+2$ is valued only by the agents $i-1$ and $i$. Clearly, $X$ is an orientation, since every agent values all the items in her bundle.

It is easy to see that there are exactly $k$ $\mathsf{EF1}$-unhappy agents in $X$, namely the agents in the set $[k]$, all of whom $\mathsf{EF1}$-envy agent $k+1$. Note that any agent $i \in [k]$ values exactly two items in the bundle $X_{k+1}$ (and no other item). Therefore, each agent $i \in [k]$ must receive one item from $X_{k+1}$, and hence, agent $k+1$ must lose one item that she values. Note that any agent $j \in \{k+2, k+3, \ldots, n\}$ values their left neighbor's bundle $X_{j-1}$ at zero. Since we need to maintain an orientation, any item transfer must therefore be from right to left. Now, if agent $k+1$ simply transfers an item to agent $i \in [k]$, it creates $\mathsf{EF1}$ envy from $k+1$ to $k+2$. Therefore, agent $k+1$ must get an item from agent $k+2$. By our construction, no agent $j \in \{k+2, \ldots, n-1\}$ can transfer any item to $j-1$ without $\mathsf{EF1}$-envying towards $j+1$. Hence, agent $n$ must transfer an item to $n-1$, so that $n-1$ can give an item to $n-2$, and so on. That is, for any agent $i \in [k]$ to receive a single item, it requires at least $(n-k)$ valid transfers. Therefore, we require at least $k(n-k)$ valid transfers overall, to reach an $\mathsf{EF1}$ orientation.

It is easy to modify the instance described above for the setting where an $\mathsf{EF1}$-unhappy agent requires $q$ items to be $\mathsf{EF1}$-happy. We can set $|X_{k+1}| \ceq 2qk$, where agent $i \in [k]$ now values $2q$ items in $X_{k+1}$, distinct from those valued by any $j \in [k] \sm i$. This will ensure that we require at least $qk(n-k)$ valid transfers to reach an $\mathsf{EF1}$ orientation. This completes the proof.
\end{proof}


\section{\texorpdfstring{$\mathrm{PSPACE}$}{PSPACE}-completeness of \texorpdfstring{$\efconn$}{EF1-Restoration}}
\label{sec:pspace}

In this section, we show that the $\efconn$ problem may not always have an affirmative answer, and in fact, it is $\ps$-complete to decide so, even with \emph{monotone binary} valuation functions, the input allocation being near-$\mathsf{EF1}$, and even when \emph{valid exchanges} of single items are permitted, in addition to valid transfers. Our reduction is inspired by Igarashi et al. \cite{igarashi2024reachability}, in which they give a polynomial-time reduction from the $\ps$-complete problem of $\pmr$ \cite{BonamyBHIKMMW19} to the problem of deciding if there exists an $\mathsf{EF1}$ exchange path between two fixed $\mathsf{EF1}$ allocations, with general additive valuations.

Their problem involves deciding if there is a sequence of exchanges between two given $\mathsf{EF1}$ allocations. We would like to prove that deciding if \emph{any} $\mathsf{EF1}$ allocation (as opposed to some fixed $\mathsf{EF1}$ allocation, as in \cite{igarashi2024reachability}) is reachable from a given near-$\mathsf{EF1}$ allocation via transfers \emph{and} exchanges (as opposed to reachability only via exchanges in \cite{igarashi2024reachability}) is $\ps$-complete, provided we maintain a near-$\mathsf{EF1}$ allocation after each operation. This leads to an involved construction of the valuation functions in our $\efconn$ instance, so as to ensure that \emph{every} $\mathsf{EF1}$ allocation that may be constructed from our instance corresponds to \emph{exactly one} perfect matching - the target perfect matching in the $\pmr$ instance with which we begin.

\pspace*

\begin{proof}
We first show that $\efconn \in \ps$ for monotone binary valuations. Then, we prove $\ps$-hardness by reducing from the $\pmr$ problem.

\paragraph{Containment in $\ps$:} We first show that the $\efconn$ problem is in $\ps$. Recall that $\ps$ is the set of all decision problems that can be solved by a deterministic polynomial-space Turing machine. We can solve the $\efconn$ problem nondeterministically by simply guessing a path from the given near-$\mathsf{EF1}$ allocation to some $\mathsf{EF1}$ allocation. Since the total number of allocations is at most $n^m$ , if there exists such a path, then there exists one with length at most $n^m$. This shows that the problem is in $\nps$, the nondeterministic analogue of $\ps$.\footnote{$\nps$ is the set of all decision problems that can be solved by a nondeterministic polynomial-space Turing machine.} It is known that $\nps = \ps$ \cite{Savitch70}, which implies that the $\efconn$ problem is in $\ps$. We refer the reader to a standard complexity theory text like \cite{AroraBarak} for formal definitions of $\ps$, $\nps$, etc.

\paragraph{$\ps$-hardness:} We reduce the $\ps$-complete $\pmr$ problem to $\efconn$. 
    
\begin{definition}[The $\pmr$ problem] \label{def:perfect-matching-reconfiguration}
    For an undirected bipartite graph $G=(A\sqcup B,E)$ with $|A|=|B|=n$, the problem is to decide reachability between two given perfect matchings $M_0$ and $M^*$. It involves deciding if $M_0$ and $M^*$ can be reached from each other via a sequence of perfect matchings $M_0, M_1, M_2, \ldots, M_t=M^*$, such that for each $k \in [t]$, there exist edges $e^k_1, e^k_2, e^k_3, e^k_4$ of $G$ such that $M_{k-1} \sm M_k = \{e^k_1, e^k_3\}$, $M_k \sm M_{k-1} = \{e^k_2, e^k_4\}$, and $e^k_1, e^k_2, e^k_3, e^k_4$ form a cycle.
\end{definition}
    
The operation of going from $M_{k-1}$ to $M_k$ is called a \emph{flip}, and we say that $M_{k-1}$ and $M_k$ are \emph{adjacent} to each other.  

{\em Construction of an instance of the $\efconn$ problem:}
For a given instance of \textsf{Perfect} \textsf{Matching} \textsf{Reconfiguration} as described above, let us denote $A \ceq \{a_1, a_2, \ldots, a_n\}$ and $B \ceq \{b_1, b_2, \ldots, b_n\}$. We write $N(v)$ to denote the set of neighbors of vertex $v \in A \cup B$ in $G$. By possible renaming of vertices, let $M_0 \ceq \{(a_i,b_i)\}_{i \in [n]}$ and the final matching $M^* \ceq \{(a_i, b_{\pi(i)})\}_{i \in [n]}$, where $\pi$ is a permutation on $[n]$.

Now, we construct an instance of the $\efconn$ problem, with agents having monotone binary valuations.

    \begin{itemize}
        
    \item Create a set $\mathcal{N} \ceq \set{0, \ldots, n+2}$ of $n+3$ agents and a set $\G=\{a_i, \bar{a}_i,b_i\mid i\in [n]\}\cup\{r_1,r_2,r_3,r_4\}$ of $3n+4$ items. That is, for each $i \in [n]$, we have three items $a_i, \bar{a}_i$, and $b_i$, with  four additional items $r_1$, $r_2$, $r_3$, and $r_4$.
    
    \item Create the initial near-$\mathsf{EF1}$ allocation $X = \set{X_0, \ldots, X_{n+2}}$, that is given as an input, from the matching $M_0$ of $G$ as $X_0 \coloneq \emptyset$, $X_i \coloneq \set{a_i,\bar{a}_i,b_i}$ for each $i \in [n]$, $X_{n+1} \coloneq \set{r_1, r_2}$, and $X_{n+2} \coloneq \set{r_3,r_4}$.
   
    \item We define the valuations\footnote{When we say $a_i/\bar{a}_i$ while defining the valuations, we mean to say one of $a_i$ or $\bar{a}_i$ in included.} as follows:
\[
v_0(S) \coloneq 
\begin{cases}
      1 & \text{if } S = \set{a_i, \bar{a}_i, b_j} \ \forall i,j \in [n], \\
      1 & \text{if } S \subsetneq \set{a_i, \bar{a}_i, b_j}, \text{ with } \abs{S} = 2, \\
        & \quad \forall i \in [n], \ \forall j \in [n] \setminus \pi(i), \\
      0 & \text{if } S = \set{a_i/ \bar{a}_i, b_{\pi(i)}}, \ \forall i \in [n].
\end{cases}
\]

For each agent $i \in [n]$, we have $X_i = \set{a_i,\bar{a}_i,b_i}$ in $X$. We define the valuation for agent $i\in [n]$ as 
    
\[
v_i(S) \coloneq 
\begin{cases}
      1 & \text{if } S = \set{a_i, \bar{a}_i, b_j} \  \forall b_j \in N(a_i), \\
      0 & \text{if } S \subsetneq \set{a_i, \bar{a}_i, b_j} \ \forall b_j \in N(a_i), \\
      1 & \text{if } S = \set{r_3} \text{ or } S = \set{r_4}, \\
      0 & \text{if } S = \set{a_i/\bar{a}_i, b_j, c}, \forall c \in \set{b_k, a_k, \bar{a}_k, r_1, r_2},\\
        & \quad \forall k \in [n] \setminus \{i\}, \forall b_j \in N(a_i).
\end{cases}
\]

The valuation of agent $n+1$, where $X_{n+1} = \set{r_1,r_2}$ in $X$, is defined below.
    
\[
v_{n+1}(S) \coloneq 
\begin{cases} 
      1 & \text{if } S = \set{r_1, r_2}, \\
      0 & \text{if } S = \set{r_1} \text{ or } S = \set{r_2}, \\
      0 & \text{if } S = \set{r_1/r_2, c}, \forall c \in \set{r_3, r_4, b_i}, \ \forall i \in [n],\\
      1 & \text{if } S \subsetneq \set{a_i, \bar{a}_i, b_j}, \text{ with } \abs{S} = 2,\\
        & \quad \forall i \in [n], \ \forall j \in [n] \setminus \pi(i), \\
      0 & \text{if } S = \set{a_i/\bar{a}_i, b_{\pi(i)}}.
\end{cases}
\]

We define the valuation of agent $n+2$ as follows, who holds the bundle $X_{n+2} = \set{r_3,r_4}$ in $X$.

\[
v_{n+2}(S) \coloneq 
\begin{cases} 
      1 & \text{if } S = \set{r_3, r_4}, \\
      0 & \text{if } S = \set{r_3} \text{ or } S = \set{r_4}, \\
      1 & \text{if } S = \set{r_1} \text{ or } S = \set{r_2}, \\
      0 & \text{if } S = \set{r_3/r_4, c}, \forall c \in \set{a_i, \bar{a}_i, b_i}, \ \forall i \in [n].
\end{cases}
\]
\end{itemize}

Note that the valuations are listed above explicitly for a polynomial number of sets. We extend the valuations to other unlisted subsets monotonically, i.e., every superset of a $1$-valued bundle has a value $1$ and every subset of a $0$-valued bundle has a value $0$.  Clearly, this instance of $\efconn$ can be constructed in polynomial time.

{\em Properties of $X$: }Note that in the initial allocation $X$, agent $0$ has $\mathsf{EF1}$-envy towards all the agents $i \in [n]$ for whom $b_i\neq b_{\pi(i)}$. Agent $0$ is the unhappy agent in all of the near-$\mathsf{EF1}$ allocations in our instance. Observe that the initial allocation $X$ is near-$\mathsf{EF1}$, as no agent apart from agent $0$ has $\mathsf{EF1}$-envy towards any other agent. The goal is to perform a sequence of operations, i.e., valid transfers or exchanges, and finally reach some $\mathsf{EF1}$ allocation.  

Note that there is only one $\mathsf{EF1}$ allocation $Z$ in our instance, where $Z_i=\{a_i,\bar{a}_i,b_{\pi(i)}\}$ for each $i\in [n]$, $Z_0=\emptyset$, $Z_{n+1}=X_{n+1},$ and $Z_{n+2}=X_{n+2}$. These bundles  correspond to $M^*$ in the $\pmr$ problem.

The valuations forbid the following types of operations: 
   
\begin{itemize}
       
       \item \textit{No agent $i \in \mathcal{N}$ can transfer an item from her bundle}, because she would then $\mathsf{EF1}$-envy either agent $n+2$ (if $i \neq n+2$), or agent $n+1$ (if $i = n+2$).
       \item \textit{No agent $i \in [n]$ can exchange an item with agent $n+1$.} If agent $i$ gives away $a_i$ or $\bar{a}_i$ in exchange of $r_1$ or $r_2$, then she $\mathsf{EF1}$-envies agent $n+2$. Also, if $b_j \in X_i$ is exchanged with agent $n+1$ for $r_1$ or $r_2$, then the value of agent $n+1$'s bundle drops to $0$ and she $\mathsf{EF1}$-envies each agent $j \in [n]$ who does not possess $b_{\pi(j)}$.
       
       \item \textit{No agent $i \in [n]$ can exchange the $a_i$ or $\bar{a}_i$ in their bundle, with any agent $j$} because agent $i$ would then $\mathsf{EF1}$-envy agent $n+2$.
       \item \textit{No agent $i \in [n]$ can exchange any pair of items with agent $n+2$}, since agent $n+2$ would then $\mathsf{EF1}$-envy agent $n+1$.
       \item \textit{No agent $i \in [n]$ can exchange the $b_j$ in their current bundle with some other $b_{j'}$, if $b_{j'} \notin N(a_i)$}, for agent $i$ would then $\mathsf{EF1}$-envy agent $n+2$.
       \item \textit{Agent $n+1$ can not exchange any pair of items with agent $n+2$}, for agent $n+1$ would then $\mathsf{EF1}$-envy each agent $i \in [n]$ who does not possess $b_{\pi(i)}$.
       \item {\em Agent $0$ cannot receive any item.} This is because agent $0$ does not value the items $r_1,r_2,r_3,r_4$, and agent $i\in [n]$ cannot transfer an item as stated above.
   \end{itemize}

   Therefore, the only valid operation is an exchange of the {\em $b$-type} items between a pair of agents $i,j$. More precisely, let the near-$\mathsf{EF1}$ allocation at some point be $Y$, such that $Y_i=\{a_i,\bar{a}_i, b_k\}$ and $Y_j=\{a_j,\bar{a}_j, b_\ell\}$. The only valid operation is an exchange of $b_k$ and $b_\ell$ between agents $i,j$.
   
   Suppose, at some point, we have an allocation $Y$ where $Y_i = \set{a_i, \bar{a}_i, b_k}$ and $Y_j = \set{a_j, \bar{a}_j, b_\ell}$ for some $i, j \in [n]$. This represents (a part of) a perfect matching $M'$ in the graph $G$, where $(a_i, b_k), (a_j,b_\ell) \in M'$. An exchange of the items $b_k$ and $b_\ell$ between agents $i$ and $j$ corresponds to a \emph{flip} of the perfect matching $M'$ so that the edges $(a_i,b_k), (a_j,b_\ell)$ are replaced by $(a_i, b_\ell)$ and $(a_j,b_k)$ in the new perfect matching $M''$. Therefore, every valid operation in our instance corresponds to a flip in the perfect matching in the given instance. Moreover, for any intermediate near-$\mathsf{EF1}$ allocation $Y$, we may simply obtain the corresponding perfect matching as follows -- vertex $a_i$ is matched to vertex $b_j$ if $b_j \in Y_i$ for $i\in [n]$.
   
   Over the course of these valid operations, the unhappy agent $0$ will receive no item, and continue to hold an empty bundle. Therefore, the only way to make agent $0$ lose their $\mathsf{EF1}$-envy is to reach an allocation $X^*$ in which each agent $i \in [n]$ holds the bundle $X^*_i = \set{a_i, \bar{a}_i, b_{\pi(i)}}$. Observe that such an allocation $X^*$ corresponds to the perfect matching where there exists an edge between vertices $a_i$ and $b_{\pi(i)}$ for each $i \in [n]$. This is precisely the target perfect matching $M^*$. 
   
   Thus, there is a sequence of adjacent perfect matchings from $M_0$ to $M^*$ in $G$ if and only if there is a sequence of valid operations that transforms $X$ to the unique $\mathsf{EF1}$ allocation $Z$. This completes the reduction, and the proof.
   \end{proof}


\section{The \texorpdfstring{$\efxconn$}{EFX-Restoration} Problem} \label{sec:identical-efx}

It is natural to ask whether our results on $\efconn$ can be extended to stronger fairness notions like $\mathsf{EFX}$. The existence of $\mathsf{EFX}$ is guaranteed when agents have identical valuations on goods. However, we show that $\efxconn$ may not be possible, even for identical {\em additive} valuations with all the items being goods. We also show that it is $\mathrm{NP}$-hard to determine whether $\efxconn$ is possible on a given instance (\Cref{thm:EFX-Restoration-weakly-NP-hard-identical-additive}).

We now formally define the corresponding notions of $\mathsf{EFX}$-envy, $\mathsf{EFX}$ and near-$\mathsf{EFX}$ allocations, and state the $\efxconn$ problem.

\begin{definition}[$\mathsf{EFX}$-envy]\label{def:efx-envy}
We say an agent $i$ \emph{$\mathsf{EFX}$-envies}  agent $j$ in an allocation $X$ of goods if \emph{there exists} $g \in X_j$ such that $v_i(X_i) < v_i(X_j \sm g)$.

That is, $i$ continues to envy $j$ even after virtually \emph{removing a particular good} from $j$'s bundle. 
We say $i$ is \emph{$\mathsf{EFX}$-happy} in $X$ if she does not $\mathsf{EFX}$-envy any other agent.
\end{definition}

With this modified definition, we may define an $\mathsf{EFX}$ allocation and a near-$\mathsf{EFX}$ allocation.

\begin{definition}[Envy-freeness up to any item ($\mathsf{EFX}$) \cite{CaragiannisKMPS19}]\label{def:efx}
An allocation $X$ of a set of goods is said to be $\mathsf{EFX}$ if every agent is $\mathsf{EFX}$-happy in $X$.
\end{definition}

An agent $i$ is said to $\mathsf{EFX}$-envy an agent $j$ if there exists $g \in X_j$ such that $v(X_i) < v(X_j \sm g)$. An allocation is $\mathsf{EFX}$ if there no $\mathsf{EFX}$-envy in the system.

The notions of valid transfers, the $\efxconn$ and $\efxopt$ problems, and near-$\mathsf{EFX}$ allocations, are defined in a fashion analogous to \cref{def:valid-transfer}, \cref{prob:ef1-restoration}, \cref{prob:k-valid-transfers-EF1}, and \cref{def:near-ef1} respectively.

\begin{definition}[Valid Transfer (w.r.t.\ $\mathsf{EFX}$)]\label{def:valid-transfer-efx}
For an allocation $X$, consider a single item transfer from an agent to another, resulting in allocation $Y$. Then, this transfer is said to be \emph{valid} if no agent $\mathsf{EFX}$-envies another agent in $Y$ any more than they did in $X$.
\end{definition}

\begin{restatable}{problem}{efxrestoration}[{\upshape$\efxconn$}]
\label{prob:efx-restoration}
Given a fair division instance and an input allocation $X$, determine whether it is possible to reach an $\mathsf{EFX}$ allocation (from $X$) by a sequence of \emph{valid transfers}.
\end{restatable}

\begin{restatable}{problem}{ktransfersefx}[{\upshape$\efxopt$}]
\label{prob:k-valid-transfers-EFX}
   Given a fair division instance and an allocation $X$, determine the minimum number of valid transfers required to transform $X$ into an $\mathsf{EFX}$ allocation.
\end{restatable}

\begin{definition}[Near-$\mathsf{EFX}$ allocation]\label{def:near-efx}
An allocation $X$ of a set of goods is said to be near-$\mathsf{EFX}$ if exactly one \emph{fixed} agent is not $\mathsf{EFX}$-happy in $X$.
\end{definition}

 When agents have identical monotone valuations over items (that are either all goods or all chores), there always exists an $\mathsf{EFX}$ allocation \cite{PlautR20}. We shall now see that, unlike the case of $\mathsf{EF1}$, $\efxconn$ may not be possible, even when all agents have identical additive valuations over goods.

\begin{example} \label{example:identical-efx-counterexample}
   
We demonstrate a simple $\efxconn$ instance with $n = 2$ identical agents with additive valuations over $m = 6$ goods with an identical additive valuation $v$, and a near-$\mathsf{EFX}$ allocation in it, for which no valid transfer or valid exchange exists.

\begin{table}[htbp]
\centering
\setlength{\arraycolsep}{6pt}

\begin{tabular}{c@{\hspace{1.5em}}c}
$\displaystyle
\begin{array}{r|cccccc}
  & g_1 & g_2 & g_3 & g_4 & g_5 & g_6 \\
\hline
v(g) & \textcolor{violet}{1} & \textcolor{violet}{1} & \textcolor{violet}{1} & \textcolor{violet}{1} & \textcolor{teal}{6} & \textcolor{teal}{6}
\end{array}
$
&
$\displaystyle
\begin{array}{r|l}
\text{Agent} & \text{Items in } X_i \\
\hline
\textcolor{violet}{1} & \{g_1,\, g_2,\, g_3,\, g_4\} \\
\textcolor{teal}{2} & \{g_5,\, g_6\}
\end{array}
$
\end{tabular}

\caption{A near-$\mathsf{EFX}$ allocation $X$ for which no valid transfer or exchange exists.}
\label{tab:identical-efx-counterexample}
\end{table}

Clearly, the only $\mathsf{EFX}$-envy in $X$ is from agent $1$ towards agent $2$, so $X$ is near-$\mathsf{EFX}$. Finally, every transfer or exchange of items leads to new $\mathsf{EFX}$-envy being created, rendering all such operations invalid. Hence, $\efxconn$ may not be possible, even for identical additive valuations over goods. \hfill $\qed$

\end{example}

Finally, we show that even if we do not wish to maintain a near-$\mathsf{EFX}$ allocation after each operation, it may still take several transfers to restore $\mathsf{EFX}$.

\begin{example} \label{example:identical-efx-many-transfers}

Consider the following $\efxconn$ instance, with $n$ agents and $m$ items. Let $\G \ceq \set{g_1,\ldots,g_m}$.

Define the additive valuation $v$ as follows -- $v(g_1) = \ldots = v(g_{m-n}) \ceq 1$, and $v(g_{m-n+1}) = \ldots = v(g_m) \ceq m - n + 1$. Consider the near-$\mathsf{EFX}$ allocation $X$ defined as follows -- $X_1 \ceq \set{g_1,\ldots,g_{m-n}}$, $X_2 \ceq \set{g_{m-n+1},g_{m-n+2}}$, and for each $i \in \set{3,\ldots,n}$, let $X_i \ceq \set{g_{m-n+i}}$.

Clearly, the only $\mathsf{EFX}$-envy in $X$ is from agent $1$ towards agent $2$, so $X$ is near-$\mathsf{EFX}$. Note that the only way this can be repurposed into an $\mathsf{EFX}$ allocation is when all the goods valued $1$ (which currently belong to agent $1$) are all \enquote{equally distributed} among the $n$ agents, and agent $1$ receives one of the goods of value $m-n+1$ from agent $2$ - one may verify that this would involve making a really large number of transfers - at least $(n - 1)\left(\frac{m-n}{n}\right)$, to be precise. Furthermore, there is no way we can ensure near-$\mathsf{EFX}$ at any point during this process! \hfill $\qed$

\end{example}

 \cref{example:identical-efx-counterexample} and \cref{example:identical-efx-many-transfers} indicate that the $\efxconn$ problem needs operations more powerful than just transfers/exchanges to be permitted, in case one would like to maintain a near-$\mathsf{EFX}$ allocation after each such operation. We now state and prove the main result of this section.

\thmidenticalefx*
\label{thm:EFX-Restoration-weakly-NP-hard-identical-additive-restated}

\begin{proof}
We present our proof for the setting where we have all goods. The analogous result for the setting of all chores follows similarly (see Remark~\ref{rem:efx-chores}).

We begin by presenting a reduction from the \textsc{Partition} problem, defined below.

\begin{definition}[\textsc{Partition}]
Given positive integers $a_1,\ldots,a_m$, such that $\sum_{i=1}^m=2T$, decide whether there exists a set $R \subseteq [m]$ such that $\sum_{i \in R} a_i = T$.
\end{definition}

Consider an instance $\mathcal{S} \coloneq (a_1,\ldots,a_m;T)$ of \textsc{Partition}. Without loss of generality, we may assume that $1 \leq a_i < T$ for every $i \in [m]$, $\sum_{i=1}^m a_i > T$. 
The corresponding instance $\mathcal{I(S)}$ of $\efxconn$ is defined as follows:
There are $m+2$ agents denoted by $\{1,\ldots,m+2\}$, and $2m+2$ items $\{b_i,g_i\mid i\in [m]\}\cup\{p,c\}$, with valuations and the initial allocation as shown in \cref{tab:subset-sum-efx-reduction}. 
Note that the total value of all the items is $(m+3)T+\epsilon$. We refer to the items $g_i$, $i\in [m]$ as {\em $g$-items}.

\begin{table}[htbp]
\centering
\setlength{\arraycolsep}{8pt}
\renewcommand{\arraystretch}{1.15}

\[
\begin{array}{|r|l|c|}
\hline
\textsc{Agent} & \textsc{Initial bundle } X_i & \textsc{Value}\\
\hline
1 & \{p\} & \epsilon\\
\hline
i+1 \ (i\in[m]) & \{b_i,\, g_i\} & T+a_i\\
\hline
m+2 & \{c\} & T\\
\hline
\end{array}
\]

\caption{Constructed instance $\mathcal{I(S)}$ in the reduction from \textsc{Partition}. All agents have the same additive valuation function $v$. Here, $v(g_i) \coloneq a_i$.}
\label{tab:subset-sum-efx-reduction}
\end{table}

\noindent{\bf \textsc{Partition} $\implies$ $\efxconn$:}

We show that, if there is a positive solution to the \textsc{Partition} instance $\mathcal{S}$, then the constructed instance $\mathcal{I(S)}$ of $\efxconn$ also admits a positive solution. Let the solution to $\mathcal{S}$ be a subset $R$ of $\{a_1,\ldots,a_m\}$. Then, any sequence of transfers of the items $\set{g_i}_{i \in R}$ to agent $1$ is a sequence of valid transfers. Moreover, after transferring these items to agent $1$, the value of agent $1$'s bundle is $T+\epsilon$ whereas that of any other agent is at least $T$. Thus, the transfer sequence results in an $\mathsf{EFX}$ allocation. 

Now we show the other direction.\\

\noindent{\bf $\efxconn$ $\implies$ \textsc{Partition}:}

Now, we prove that a sequence of valid transfers that results in an $\mathsf{EFX}$ allocation corresponds to a subset of items in $\mathcal{S}$ that sum to value exactly $T$.
Let there be a sequence of valid transfers from $X$ to an $\mathsf{EFX}$ allocation $Y$ in $\mathcal{I(S)}$. We note the following properties of the sequence and that of $Y$:
\begin{claim}\label{clm:EFX-sequence}
The following statements hold.
\begin{enumerate}
    \item\label{itm:Y1} For each agent $i\in [m+2]$, $v(Y_i)\geq T$.
    \item\label{itm:size} There is an agent $i$ such that $|Y_i|=1$. Moreover, $v(Y_i)=T$.
    \item\label{itm:anchor} The anchor item $p\in Y_1$.
    \item\label{itm:H} No item of value $T$ is transferred to anyone.
\end{enumerate}
\end{claim}
\begin{proof}
{\bf \Cref{itm:Y1}: }If, for some $i$, $v(Y_i)<T$, then there must be an agent $j \neq i$ such that either $Y_j$ contains two items of value $T$ each or an item of value $T$ along with another item. In either of these cases, $i$ has an $\mathsf{EFX}$-envy towards
$j$ in $Y$, which is a contradiction.

{\bf \Cref{itm:size}:} This follows because there are $m+2$ agents and $2m+2$ items, and no agent can have an empty bundle in any $\mathsf{EFX}$ allocation. In fact, in all the intermediate allocations, no agent can have an empty bundle because the transfer leading to such a bundle cannot be a valid transfer. Hence, there exists an agent $i$ such that $|Y_i|=1$. Clearly, $v(Y_i)>T$ is not possible since no single item is valued strictly greater than $T$. Now, by \Cref{itm:Y1} above, $v(Y_i)<T$ is also not possible since $Y$ is an $\mathsf{EFX}$ allocation. Hence, $v(Y_i)=T$.

{\bf \Cref{itm:anchor}:} In all the allocations encountered before $Y$ in the transfer sequence, agent $1$ $\mathsf{EFX}$-envies someone. So the transfer of $p$ from $1$ to any other agent is not a valid transfer. 

{\bf \Cref{itm:H}:} Agent $1$ has $p$ throughout the transfer sequence, by \Cref{itm:anchor} above. So, if an agent $j$ transfers an item of value $T$ to agent $1$, $j$ will $\mathsf{EFX}$-envy $1$. So this cannot be a valid transfer. For the same reason, since every agent other than $1$ has an item of value $T$, none of them can receive an item of value $T$.
\end{proof}
From \cref{clm:EFX-sequence} above, $Y_1$ must consist of only $p$ and $g$-items. Moreover, we claim that $v(Y_1)=T+\epsilon$. This can be seen as follows: $v(Y_1)<T$ is not possible by \Cref{itm:Y1}. We cannot have $v(Y_1)=T$ as item values except that of $p$ are integral. If $v(Y_1)>T+\epsilon$ then, by \Cref{itm:size}, there is an agent $i\neq 1$ where $v(Y_i)=T$ who $\mathsf{EFX}$-envies agent $1$. This is because $v(Y_1)>T+\epsilon$ implies $v(Y_1\setminus\{p\})>T$. Thus the sequence of valid transfers must result in the transfer of a set $S$ consisting only $g$-items to agent $1$, and moreover, $v(S)=T$. This constitutes a 
solution to the partition instance $\mathcal{S}$. Hence, $\efxconn$ is weakly $\mathrm{NP}$-hard. One can also observe that this reduction also establishes weak $\mathrm{NP}$-hardness of $\efxopt$ -- set the valid transfer budget to $m/2$, and note that an $\mathsf{EFX}$ allocation is reachable from $\mathcal{I(S)}$ within $m/2$ valid transfers if and only if the \textsc{Partition} instance $\mathcal{S}$ has a positive solution.
This completes the proof.
\end{proof}
\begin{remark}\label{rem:efx-chores}
    We note that the above reduction, with minor modifications, gives $\mathrm{NP}$-hardness for $\efxconn$ with identical additive valuations over chores. The reduction for chores consists of the same instance as described above, with all values negated. Whenever, in the goods instance, agent $i$ $\mathsf{EFX}$-envies $j$, we have $\exists g\in X_j$ such that $v(X_i)<v(X_j\sm g)$. When the values are negated, we get a chore $c$ corresponding to the good $g$, and it holds that $\exists c\in X_j$ such that $v(X_i)>v(X_j\sm c)$. Thus, whenever $i$ $\mathsf{EFX}$-envies $j$ in the goods instance, $j$ $\mathsf{EFX}$-envies $i$ in the corresponding instance for chores. The same sequence of transfers then restores $\mathsf{EFX}$.
\end{remark}


\section{Future Directions} \label{sec:conclusion}

The meta-question raised in this work is the following: what is/are the \enquote{least complicated valid operation(s)} that one has to permit to reach a \emph{fair} allocation by never breaking the \emph{near-fair} guarantees? This leads to several natural open directions that one might explore; we mention a few here.

\begin{enumerate}
    \item \cref{example:identical-ef1-mixed-counterexample} shows that $\efconn$ may not be possible for mixed manna, even when the agents have identical additive valuations. However, it is not known whether $\efconn$ and $\efopt$ are $\mathrm{NP}$-hard in this setting. Due to the modified definition of $\mathsf{EF1}$ in this mixed manna setting, we may require a different style of reduction.
    \item We show that deciding whether $\efconn$ is possible under additive binary valuations is $\mathrm{NP}$-hard, whereas we give a polynomial-time algorithm when the valuations are graphical and the allocations are restricted to be orientations. Is it possible to remove one of these two restrictions, namely, the valuations being graphical, or allocations being orientations?
    \item We show $\mathrm{NP}$-hardness results for $\efconn$ in the setting where agents have additive binary valuations. Does the corresponding restoration problem for weaker notions of fairness like $\mathsf{PROP1}$ admit a polynomial-time algorithm in this setting?
\end{enumerate}


\section*{Acknowledgments}

P.\ N.\ would like to thank the Max Planck Institute for Informatics, SIC, Germany, for the invitation to visit them. A part of this work was done during this visit. A part of this work was done when N.\ R.\ was supported by the Lise Meitner Postdoctoral Fellowship (2023-25) at the Max Planck Institute for Informatics, SIC, Germany.


\bibliographystyle{alpha}
\bibliography{ref}

@article{plaut2020almost,
  title={Almost envy-freeness with general valuations},
  author={Plaut, Benjamin and Roughgarden, Tim},
  journal={SIAM Journal on Discrete Mathematics},
  volume={34},
  number={2},
  pages={1039--1068},
  year={2020},
  publisher={SIAM}
}

@article{AACGMM25,
  author       = {Hannaneh Akrami and
                  Noga Alon and
                  Bhaskar Ray Chaudhury and
                  Jugal Garg and
                  Kurt Mehlhorn and
                  Ruta Mehta},
  title        = {{EFX:} {A} Simpler Approach and an (Almost) Optimal Guarantee via
                  Rainbow Cycle Number},
  journal      = {Oper. Res.},
  volume       = {73},
  number       = {2},
  pages        = {738--751},
  year         = {2025}
}

@article{mahara2024extension,
  title={Extension of additive valuations to general valuations on the existence of {EFX}},
  author={Mahara, Ryoga},
  journal={Mathematics of operations research},
  volume={49},
  number={2},
  pages={1263--1277},
  year={2024},
  publisher={INFORMS}
}

@article{amanatidis2021maximum,
  title={Maximum {Nash} welfare and other Stories about {EFX}},
  author={Amanatidis, Georgios and Birmpas, Georgios and Filos-Ratsikas, Aris and Hollender, Alexandros and Voudouris, Alexandros A},
  journal={Theoretical Computer Science},
  volume={863},
  pages={69--85},
  year={2021},
  publisher={Elsevier}
}

@article{amanatidis2020multiple,
  title={Multiple birds with one stone: Beating 1/2 for {EFX} and {GMMS} via envy cycle elimination},
  author={Amanatidis, Georgios and Markakis, Evangelos and Ntokos, Apostolos},
  journal={Theoretical Computer Science},
  volume={841},
  pages={94--109},
  year={2020},
  publisher={Elsevier}
}

@inproceedings{berger2021almost,
  title={Almost Full {EFX} Exists for Four Agents},
  author={Berger, Ben and Cohen, Avi and Feldman, Michal and Fiat, Amos},
  booktitle={Proceedings of the 36th AAAI Conference on Artificial Intelligence ({AAAI})},
  volume={36(5)},
  year={2022},
  pages={4826-4833}
}

@inproceedings{chaudhury2020efx,
  title={{EFX} Exists for Three Agents},
  author={Chaudhury, Bhaskar Ray and Garg, Jugal and Mehlhorn, Kurt},
  publisher = {{ACM}},
  booktitle = {Proceedings of the 21st ACM Conference on Economics and Computation ({EC})},
  pages={1--19},
  year={2020}
}

@inproceedings{online15, 
author       = {Martin Aleksandrov and
                  Haris Aziz and
                  Serge Gaspers and
                  Toby Walsh},
title = {Online fair division: analysing a food bank problem}, year = {2015},
booktitle = {Proceedings of the 24th International Conference on Artificial Intelligence ({IJCAI})},
pages = {2540--2546},
publisher    = {{AAAI} Press},
address = {Buenos Aires, Argentina}
}

@inproceedings{aleksandrov2017pure,
 author       = {Martin Aleksandrov and
                  Toby Walsh},
  title        = {Pure Nash Equilibria in Online Fair Division},
  booktitle    = {Proceedings of the 26th International Joint Conference on Artificial Intelligence ({IJCAI})},
  pages        = {42--48},
  year         = {2017}
}

@article{survey2022,
title = {Fair division of indivisible goods: Recent progress and open questions},
journal = {Artificial Intelligence},
volume = {322},
pages = {103965},
year = {2023},
author = {Georgios Amanatidis and Haris Aziz and Georgios Birmpas and Aris Filos-Ratsikas and Bo Li and Hervé Moulin and Alexandros A. Voudouris and Xiaowei Wu}
}

@book{AroraBarak,
  author       = {Sanjeev Arora and
                  Boaz Barak},
  title        = {Computational Complexity - {A} Modern Approach},
  publisher    = {Cambridge University Press},
  year         = {2009},
address = {Cambridge, UK}
}

@inproceedings{AzizCIW19,
  author       = {Haris Aziz and
                  Ioannis Caragiannis and
                  Ayumi Igarashi and
                  Toby Walsh},
  title        = {Fair Allocation of Indivisible Goods and Chores},
  year = {2019},
  booktitle    = {Proceedings of the 28th International Joint Conference on Artificial Intelligence ({IJCAI})},
  pages        = {53--59}
}

@inproceedings{BarmanKV2018,
author = {Siddharth Barman and
                  Sanath Kumar Krishnamurthy and
                  Rohit Vaish},
title = {Finding Fair and Efficient Allocations},
year = {2018},
publisher = {{ACM}},
booktitle = {Proceedings of the 19th {ACM} Conference on Economics and Computation ({EC})},
pages = {557–574},
address = {Ithaca, NY, USA}
}

@article{BenadeHP25,
  author       = {Gerdus Benad{\`{e}} and
                  Daniel Halpern and
                  Alex Psomas},
  title        = {Dynamic Fair Division with Partial Information},
  journal      = {Oper. Res.},
  volume       = {73},
  number       = {4},
  pages        = {1876--1896},
  year         = {2025}
}

@inproceedings{BhaskarSV2020,
  author       = {Umang Bhaskar and
                  A. R. Sricharan and
                  Rohit Vaish},
  title        = {On Approximate Envy-Freeness for Indivisible Chores and Mixed Resources},
  booktitle    = {Approximation, Randomization, and Combinatorial Optimization. Algorithms and Techniques ({APPROX/RANDOM})},
  pages        = {1:1--1:23},
  publisher    = {Schloss Dagstuhl - Leibniz-Zentrum f{\"{u}}r Informatik},
  year    = {2021},
  address = {Seattle, WA, USA}
}

@Inbook{Biswas2023,
author="Biswas, Arpita
and Payan, Justin
and Sengupta, Rik
and Viswanathan, Vignesh",
title="The Theory of Fair Allocation Under Structured Set Constraints",
bookTitle="Ethics in Artificial Intelligence: Bias, Fairness and Beyond",
year="2023",
publisher="Springer Nature Singapore",
address="Singapore",
pages="115--129"
}

@inproceedings{BonamyBHIKMMW19,
  author       = {Marthe Bonamy and
                  Nicolas Bousquet and
                  Marc Heinrich and
                  Takehiro Ito and
                  Yusuke Kobayashi and
                  Arnaud Mary and
                  Moritz M{\"{u}}hlenthaler and
                  Kunihiro Wasa},
  title        = {The Perfect Matching Reconfiguration Problem},
  booktitle    = {44th International Symposium on Mathematical Foundations of Computer Science ({MFCS})},
  pages        = {80:1--80:14},
  publisher    = {Schloss Dagstuhl - Leibniz-Zentrum f{\"{u}}r Informatik},
  year         = {2019}
}

@book{brams1996fair,
  title={Fair division - from cake-cutting to dispute resolution},
  author={Steven J. Brams and Alan D. Taylor},
  year={1996},
  publisher={Cambridge University Press},
address = {Cambridge, UK}
}

@book{brandt2016handbook,
  title={Handbook of Computational Social Choice},
  author={Felix Brandt and
                  Vincent Conitzer and
                  Ulle Endriss and
                  J{\'{e}}r{\^{o}}me Lang and
                  Ariel D. Procaccia},
  year={2016},
  publisher={Cambridge University Press},
address = {Cambridge, UK}
}

@article{budish2011combinatorial,
  title={The Combinatorial Assignment Problem: Approximate Competitive Equilibrium from Equal Incomes},
  author={Eric Budish},
  journal={Journal of Political Economy},
  volume={119},
  number={6},
  pages={1061--1103},
  year={2011},
  publisher={University of Chicago Press}
}

@inproceedings{budish2012multi,
  author       = {Eric Budish and
                  Estelle Cantillon},
  title        = {Strategic Behavior in Multi-unit Assignment Problems: Theory and Evidence
                  from Course Allocations},
  booktitle    = {Computational Social Systems and the Internet, 1.7. - 6.7.2007},
  series       = {Dagstuhl Seminar Proceedings},
  volume       = {07271},
  publisher    = {Internationales Begegnungs- und Forschungszentrum fuer Informatik
                  (IBFI)},
  year         = {2007},
  address = {Schloss Dagstuhl, Germany},
pages =	{1--1}
}

@article{CaragiannisKMPS19,
  author       = {Ioannis Caragiannis and
                  David Kurokawa and
                  Herv{\'{e}} Moulin and
                  Ariel D. Procaccia and
                  Nisarg Shah and
                  Junxing Wang},
  title        = {The Unreasonable Fairness of Maximum Nash Welfare},
  journal      = {{ACM} Trans. Economics and Comput.},
  volume       = {7},
  number       = {3},
  pages        = {12:1--12:32},
  year         = {2019}
}

@article{CISZ21,
author       = {Mithun Chakraborty and
                  Ayumi Igarashi and
                  Warut Suksompong and
                  Yair Zick},
  title        = {Weighted Envy-freeness in Indivisible Item Allocation},
  journal      = {{ACM} Trans. Economics and Comput.},
  volume       = {9},
  number       = {3},
  pages        = {18:1--18:39},
  year         = {2021}
}

@article{chakraborty2021picking,
  author       = {Mithun Chakraborty and
                  Ulrike Schmidt{-}Kraepelin and
                  Warut Suksompong},
  title        = {Picking sequences and monotonicity in weighted fair division},
  journal      = {Artif. Intell.},
  volume       = {301},
  pages        = {103578},
  year         = {2021}
}

@inproceedings{ChandramouleeswaranNR25,
  author       = {Harish Chandramouleeswaran and
                  Prajakta Nimbhorkar and
                  Nidhi Rathi},
  title        = {Fair Division in a Variable Setting},
  booktitle    = {Proceedings of the 24th International Conference on Autonomous Agents
                  and Multiagent Systems, {AAMAS} 2025,
                  2025},
  pages        = {472--480},
  year         = {2025}
}

@inproceedings{christodoulou2023fair,
  title={Fair allocation in graphs},
   author       = {George Christodoulou and
                  Amos Fiat and
                  Elias Koutsoupias and
                  Alkmini Sgouritsa},
  booktitle={Proceedings of the 24th {ACM} Conference on Economics and Computation ({EC})},
  pages={473--488},
  year={2023},
publisher    = {{ACM}},
address = {London, UK}
}

@misc{deligkas2024ef1efxorientations,
      title={{EF1} and {EFX} Orientations}, 
      author={Argyrios Deligkas and Eduard Eiben and Tiger-Lily Goldsmith and Viktoriia Korchemna},
      year={2024},
      eprint={2409.13616},
      archivePrefix={arXiv},
      primaryClass={cs.GT},
      url={https://arxiv.org/abs/2409.13616}, 
}

@article{dong2025reconfiguring,
  title={Reconfiguring Proportional Committees},
  author={Dong, Chris and Frank, Fabian and Peters, Jannik and Suksompong, Warut},
  journal={arXiv preprint arXiv:2504.15157},
  year={2025}
}

@article{etkin2007spectrum,
  author       = {Ra{\'{u}}l H. Etkin and
                  Abhay Parekh and
                  David Tse},
  title        = {Spectrum sharing for unlicensed bands},
  journal      = {{IEEE} J. Sel. Areas Commun.},
  volume       = {25},
  number       = {3},
  pages        = {517--528},
  year         = {2007}
}

@book{foley1966resource,
  title={Resource allocation and the public sector},
  author={Duncan Karl Foley},
  year={1966}, 
  volume={7:45-98},
  publisher={Yale Economic Essays},
  address = {New Haven, CT, USA}
}

@book{GareyJ79,
  author    = {Michael R. Garey and David S. Johnson},
  title     = {Computers and Intractability: A Guide to the Theory of {NP}-Completeness},
  publisher = {W. H. Freeman},
  year      = {1979}
}

@article{guo2023survey,
  title={A survey on fair allocation of chores},
  author={Hao Guo and Weidong Li and Bin Deng},
  journal={Mathematics},
  volume={11},
  number={16},
  pages={3616},
  year={2023},
  publisher={MDPI}
}

@inproceedings{halpern2020fair,
  author       = {Daniel Halpern and
                  Ariel D. Procaccia and
                  Alexandros Psomas and
                  Nisarg Shah},
  title        = {Fair Division with Binary Valuations: One Rule to Rule Them All},
  booktitle    = {Web and Internet Economics - 16th International Conference, {WINE}},
  series       = {Lecture Notes in Computer Science},
  volume       = {12495},
  pages        = {370--383},
  publisher    = {Springer},
year = {2020},
address = {Beijing, China}
}

@inproceedings{he2019achieving,
  author       = {Jiafan He and
                  Ariel D. Procaccia and
                  Alexandros Psomas and
                  David Zeng},
  editor       = {Sarit Kraus},
  title        = {Achieving a Fairer Future by Changing the Past},
  booktitle    = {Proceedings of the 28th International Joint Conference on Artificial Intelligence ({IJCAI})},
  pages        = {343--349},
  year         = {2019}
}

@article{igarashi2024reachability,
  author       = {Ayumi Igarashi and
                  Naoyuki Kamiyama and
                  Warut Suksompong and
                  Sheung Man Yuen},
  title        = {Reachability of Fair Allocations via Sequential Exchanges},
  journal      = {Algorithmica},
  volume       = {86},
  number       = {12},
  pages        = {3653--3683},
  year         = {2024}
}

@article{ItoKKKNOO23,
  author       = {Takehiro Ito and
                  Naonori Kakimura and
                  Naoyuki Kamiyama and
                  Yusuke Kobayashi and
                  Yuta Nozaki and
                  Yoshio Okamoto and
                  Kenta Ozeki},
  title        = {On reachable assignments under dichotomous preferences},
  journal      = {Theor. Comput. Sci.},
  volume       = {979},
  pages        = {114196},
  year         = {2023}
}

@article{kash2014no,
  author       = {Ian A. Kash and
                  Ariel D. Procaccia and
                  Nisarg Shah},
  title        = {No Agent Left Behind: Dynamic Fair Division of Multiple Resources},
  journal      = {J. Artif. Intell. Res.},
  volume       = {51},
  pages        = {579--603},
  year         = {2014}
}

@inproceedings{lipton2004approximately,
  title={On approximately fair allocations of indivisible goods},
  author={Richard J. Lipton and
                  Evangelos Markakis and
                  Elchanan Mossel and
                  Amin Saberi},
  publisher = {{ACM}},
  booktitle = {Proceedings of the 5th {ACM} Conference on Electronic Commerce ({EC})},
  pages={125--131},
  year={2004}
}

@inproceedings{mattei2017mechanisms,
  author       = {Nicholas Mattei and
                  Abdallah Saffidine and
                  Toby Walsh},
  title        = {Mechanisms for Online Organ Matching},
  booktitle    = {Proceedings of the 26th International Joint Conference on Artificial Intelligence ({IJCAI})},
  pages        = {345--351},
  year         = {2017}
}

@book{moulin2004fair,
  author       = {Herv{\'{e}} Moulin},
  title        = {Fair division and collective welfare},
  publisher    = {{MIT} Press},
  year         = {2003},
  address = {Cambridge, MA, USA}
}

@article{PlautR20,
  author       = {Benjamin Plaut and
                  Tim Roughgarden},
  title        = {Almost Envy-Freeness with General Valuations},
  journal      = {{SIAM} J. Discret. Math.},
  volume       = {34},
  number       = {2},
  pages        = {1039--1068},
  year         = {2020}
}

@article{pratt1990fair,
  title={The fair and efficient division of the Winsor family silver},
  author={John Winsor Pratt and Richard Jay Zeckhauser},
  journal={Management Science},
  volume={36},
  number={11},
  pages={1293--1301},
  year={1990},
  publisher={INFORMS}
}

@article{Procaccia20,
  author       = {Ariel D. Procaccia},
  title        = {An answer to fair division's most enigmatic question: technical perspective},
  journal      = {Commun. {ACM}},
  volume       = {63},
  number       = {4},
  pages        = {118},
  year         = {2020}
}

@book{robertson1998cake,
 author       = {Jack M. Robertson and
                  William A. Webb},
  title        = {Cake-cutting algorithms - be fair if you can},
  publisher    = {A {K} Peters},
  year         = {1998},
  address = {Wellesley, MA, USA}
}

@article{Savitch70,
  author       = {Walter J. Savitch},
  title        = {Relationships Between Nondeterministic and Deterministic Tape Complexities},
  journal      = {J. Comput. Syst. Sci.},
  volume       = {4},
  number       = {2},
  pages        = {177--192},
  year         = {1970}
}

@article{segal2018resource,
  author       = {Erel Segal{-}Halevi and
                  Bal{\'{a}}zs R. Sziklai},
  title        = {Resource-monotonicity and population-monotonicity in connected cake-cutting},
  journal      = {Math. Soc. Sci.},
  volume       = {95},
  pages        = {19--30},
  year         = {2018}
}

@article{segal2019monotonicity,
  title={Monotonicity and competitive equilibrium in cake-cutting},
  author={Erel Segal{-}Halevi and
                  Bal{\'{a}}zs R. Sziklai},
  journal={Economic Theory},
  volume={68},
  number={2},
  pages={363--401},
  year={2019}
}

@article{steinhaus1948problem,
  title={The problem of fair division},
  author={Hugo Steinhaus},
  journal={Econometrica},
  volume={16},
  pages={101--104},
  year={1948}
}

@article{stromquist1980cut,
  title={How to Cut a Cake Fairly},
  author={Walter Stromquist},
  journal={The American Mathematical Monthly},
  volume={87},
  number={8},
  pages={640--644},
  year={1980}
}

@article{Su1999rental,
  title={Rental Harmony: Sperner's Lemma in Fair Division},
  author={Francis Edward Su},
  journal={The American Mathematical Monthly},
  volume={106},
  number={10},
  pages={930--942},
  year={1999}
}

@article{VardiPF22,
  author       = {Shai Vardi and
                  Alexandros Psomas and
                  Eric J. Friedman},
  title        = {Dynamic Fair Resource Division},
  journal      = {Math. Oper. Res.},
  volume       = {47},
  number       = {2},
  pages        = {945--968},
  year         = {2022}
}

@phdthesis{vossen2002fair,
  title={Fair Allocation Concepts in Air Traffic Management},
  author={Thomas Vossen},
  year={2002},
  school={University of Maryland}
}

@inproceedings{walsh2011online,
  author       = {Toby Walsh},
  title        = {Online Cake Cutting},
  booktitle    = {Algorithmic Decision Theory - 2nd International Conference ({ADT})},
  series       = {Lecture Notes in Computer Science},
  volume       = {6992},
  pages        = {292--305},
  publisher    = {Springer},
  year         = {2011}
}

@article{yuen2024reforming,
  title={Reforming an Unfair Allocation by Exchanging Goods},
  author={Yuen, Sheung Man and Igarashi, Ayumi and Kamiyama, Naoyuki and Suksompong, Warut},
  journal={Proceedings of 36th International Symposium on Algorithms and Computation ({ISAAC})},
  year={2025}
}

@misc{zeng2024structure,
      title={On the structure of {EFX} orientations on graphs}, 
      author={Jinghan A Zeng and Ruta Mehta},
      year={2024},
      eprint={2404.13527},
      archivePrefix={arXiv},
      primaryClass={cs.GT},
      url={https://arxiv.org/abs/2404.13527}, 
}

@inproceedings{landscape24,
  author       = {Yu Zhou and
                  Tianze Wei and
                  Minming Li and
                  Bo Li},
  title        = {A Complete Landscape of {EFX} Allocations on Graphs: Goods, Chores and Mixed Manna},
  booktitle    = {Proceedings of the 23rd International Joint Conference on Artificial Intelligence ({IJCAI})},
  pages        = {3049--3056},
  year         = {2024}
}

@misc{hvetalEFXExists24,
  title = {{{EFX Exists}} for {{Three Types}} of {{Agents}}},
  author = {HV, Vishwa Prakash and Ghosal, Pratik and Nimbhorkar, Prajakta and Varma, Nithin},
  year = {2024},
  month = nov,
  number = {arXiv:2410.13580},
  eprint = {2410.13580},
  primaryclass = {cs},
  publisher = {arXiv},
  urldate = {2024-11-14},
  archiveprefix = {arXiv},
  keywords = {Computer Science - Computer Science and Game Theory,Computer Science - Data Structures and Algorithms,Computer Science - Multiagent Systems,efx}
}

@article{maharaExtensionAdditive24,
  title = {Extension of {{Additive Valuations}} to {{General Valuations}} on the {{Existence}} of {{EFX}}},
  author = {Mahara, Ryoga},
  year = {2024},
  month = may,
  journal = {Mathematics of Operations Research},
  volume = {49},
  number = {2},
  pages = {1263--1277},
  issn = {0364-765X, 1526-5471},
  doi = {10.1287/moor.2022.0044},
  urldate = {2024-11-14},
  langid = {english},
  keywords = {efx}
}

@misc{akrami2026counterexampleefxnge,
      title={A Counterexample to {EFX}; $n \ge 3$ Agents, $m \ge n + 5$ Items, Monotone Valuations; via {SAT}-Solving}, 
      author={Hannaneh Akrami and Alexander Mayorov and Kurt Mehlhorn and Shreyas Srinivas and Christoph Weidenbach},
      year={2026},
      eprint={2604.18216},
      archivePrefix={arXiv},
      primaryClass={cs.GT},
      url={https://arxiv.org/abs/2604.18216}, 
}

\end{document}